\documentclass[a4paper,USenglish,cleveref,autoref,thm-restate,nolineno]{socg-lipics-v2019}

\usepackage{tikz}
\usepackage{textgreek} 
\hypersetup{unicode=true} 
\usepackage{csquotes} 

\renewcommand*\Re{\mathbb R}
\newcommand*\Nat{\mathbb N}
\newcommand*\card[1]{\lvert #1\rvert}
\newcommand*\edges{\mathcal{E}}
\DeclareMathOperator*{\VC}{\dim}

\newtheorem*{problem}{Problem}

\title{The \texorpdfstring{$\epsilon$-$t$-Net}{\textepsilon-t-Net} Problem}

\author{Noga Alon}{Department of Mathematics, Princeton University, Princeton, NJ 08544, USA \and Schools of Mathematics and Computer Science, Tel Aviv University, Tel Aviv 69978, Israel}{nogaa@tau.ac.il}{}{Research supported in part by NSF grant DMS-1855464, ISF grant 281/17, GIF grant G-1347-304.6/2016, and the Simons Foundation.}
\author{Bruno Jartoux}{Department of Computer Science, Ben-Gurion University of the Negev, Be'er-Sheva, Israel}{jartoux@post.bgu.ac.il}{https://orcid.org/0000-0002-5341-1968}{Research supported by the European Research Council (ERC) under the European Union’s Horizon 2020 research and innovation programme (Grant agreement No. 678765) and by Grant 635/16 from the Israel Science Foundation.}
\author{Chaya Keller}{Department of Computer Science, Ariel University, Ariel, Israel}{chayak@ariel.ac.il}{}{Part of the research was done when the author was at the Technion, Israel, and was supported by Grant 409/16 from the Israel Science Foundation.}
\author{Shakhar Smorodinsky}{Department of Mathematics, Ben-Gurion University of the Negev, Be'er-Sheva, Israel}{shakhar@math.bgu.ac.il}{https://orcid.org/0000-0003-3038-6955}{Research partially supported by Grant 635/16 from the Israel Science Foundation.}
\author{Yelena Yuditsky}{Department of Mathematics, Ben-Gurion University of the Negev, Be'er-Sheva, Israel}{yuditskyL@gmail.com}{}{Research supported by the European Research Council (ERC) under the European Union’s Horizon 2020 research and innovation programme (Grant agreement No. 678765) and by Grant 635/16 from the Israel Science Foundation.}

\authorrunning{N.\@ Alon, B.\@ Jartoux, C.\@ Keller, S.\@ Smorodinsky, Y.\@ Yuditsky}
\Copyright{Noga Alon, Bruno Jartoux, Chaya Keller, Shakhar Smorodinsky, and Yelena Yuditsky}

\acknowledgements{The authors are grateful to Adi Shamir for fruitful suggestions regarding the application of $\epsilon$-$t$-nets to secret sharing.}

\hideLIPIcs 

\ccsdesc{\smash{}}

\keywords{epsilon-nets, geometric hypergraphs, VC-dimension, linear union complexity}
\relatedversion{An abridged version of this paper is to appear in the proceedings of the 36th international Symposium on Computational Geometry (SoCG 2020).}

\begin{document}
\maketitle

\begin{abstract}
We study a natural generalization of the classical $\epsilon$-net problem (Haussler--Welzl 1987), which we call \emph{the $\epsilon$-$t$-net problem}: Given a hypergraph on $n$ vertices and parameters $t$ and $\epsilon\geq \frac t n$, find a minimum-sized family $S$ of $t$-element subsets of vertices such that each hyperedge of size at least $\epsilon n$ contains a set in $S$. When $t=1$, this corresponds to the $\epsilon$-net problem.

We prove that any sufficiently large hypergraph with VC-dimension $d$ admits an $\epsilon$-$t$-net of size $O(\frac{ (1+\log t)d}{\epsilon} \log \frac{1}{\epsilon})$. 
For some families of geometrically-defined hypergraphs (such as the dual hypergraph of regions with linear union complexity), we prove the existence of $O(\frac{1}{\epsilon})$-sized $\epsilon$-$t$-nets. 

We also present an explicit construction of $\epsilon$-$t$-nets (including $\epsilon$-nets) for hypergraphs with bounded VC-dimension. In comparison to previous constructions for the special case of $\epsilon$-nets (i.e., for $t=1$), it does not rely on advanced derandomization techniques. To this end we introduce a variant of the notion of VC-dimension which is of independent interest.
\end{abstract}

\section{Introduction}

\subsection{Preliminaries}

\subsubsection*{Hypergraphs and VC-dimension.} 

A \textit{hypergraph} is a pair $H=(V,\edges)$ where $V$ is a set of \emph{vertices} and $\edges \subseteq 2^V$ is the set of \textit{hyperedges} of $H$. When $V$ is finite, $H$ is a \textit{finite} hypergraph.

A subset $V'\subseteq V$ is \emph{shattered} if all its subsets are realized by $\edges$, meaning $\{ V'\cap e\colon e\in \edges\} = 2^{V'}$. The \textit{VC-dimension} of $H$, denoted by $\VC H$, is the cardinality of a largest shattered subset of $V$ or $+\infty$ if arbitrarily large subsets are shattered (which does not happen in finite hypergraphs). This parameter plays a central role in statistical learning, computational geometry, and other areas of computer science and combinatorics \cite{VC71,MATOUSEK,MV17}. 

\subsubsection*{$\epsilon$-nets, Mnets.} 

Let $\epsilon\in (0,1)$. An \emph{$\epsilon$-net} for a finite hypergraph $(V,\edges)$ is a subset of vertices $S\subseteq V$ such that $S\cap e\neq \emptyset$ for every hyperedge $e\in \edges$ such that $\card e \geq \epsilon \card V$. 

Haussler and Welzl \cite{HW87} proved that finite hypergraphs with VC-dimension $d$ admit $\epsilon$-nets of size $O(\frac{d}{\epsilon} \log\frac{d}{\epsilon})$, later improved to $O(\frac{d}{\epsilon} \log\frac{1}{\epsilon})$ \cite{KomPaW92}. 
In the last three decades, $\epsilon$-nets have found applications in diverse areas of computer science, including machine learning \cite{BEHW89}, algorithms \cite{Chan18}, computational geometry \cite{AFM12} and social choice \cite{ABKKW}.

Mustafa and Ray introduced the notion of \emph{Mnets} \cite{MR17}. For a hypergraph $(V,\edges)$ and for a fixed $\epsilon\in (0,1)$, an \emph{$\epsilon$-Mnet} is a family $\{V_1,V_2,\ldots,V_{\ell}\}$ such that each $V_i\subseteq V$, each $V_i$ is of size $\Theta(\epsilon \card V)$, and, for each $e\in \edges$ such that $\card e \geq \epsilon \card V$,  $V_i\subseteq e$ for some $V_i$. 
They constructed small $\epsilon$-Mnets (i.e., such families with small $\ell$) for several classes of geometric hypergraphs. These results were extended by Dutta et al.\@ \cite{DGJM19} using {polynomial partitioning}.

\subsubsection*{Explicit constructions} 

Although Hausssler and Welzl's proof of the $\epsilon$-net theorem is probabilistic, several deterministic constructions of $\epsilon$-nets for hypergraphs with finite VC-dimension have been devised \cite{BCM99,Ma95,MC96}. The best result of this kind is Br\"{o}nniman, Chazelle and Matou\v{s}ek's $O(\epsilon^{-d} \log^d \frac{1}{\epsilon}\card{V})$-time algorithm for computing an $\epsilon$-net of size $O(\frac{d}{\epsilon} \log \frac{d}{\epsilon})$ \cite{BCM99}.
These constructions are used to derandomize applications of $\epsilon$-nets, such as low-dimensional linear programming \cite{Chan18}.

In scenarios where the VC-dimension is $\Omega(\log \card V)$, the running time of these constructions becomes exponential in $\card V$. For one such scenario --  the hypergraph induced by half-spaces on the discrete cube $V=\{-1,1\}^d$ -- Rabani and Shpilka \cite{RS08} presented an efficient explicit construction of an $\epsilon$-net, alas of sub-optimal size: $O(\epsilon^{-b}\card{V}^a)$ for some universal constants $a,b>0$, whereas $O(\card V /\epsilon)$ can be obtained by random sampling. Like the aforementioned explicit constructions, the construction of \cite{RS08} is based on derandomization.

\subsection{Our problem}

We denote by $\binom X k$ the set of all subsets of cardinality $k$ (or \enquote{$k$-subsets}) of the set $X$.

\begin{definition}
Let $H=(V,\edges)$ be a finite hypergraph, $t$ a positive integer and $\epsilon\in(t /{\card V},1)$. A family $S \subseteq \binom V t$ of $t$-subsets of $V$ is an \emph{$\epsilon$-$t$-net} for $H$ if for every $e \in \edges$ with $\card e \geq \epsilon \card V$ there is an $s\in S$ such that $s\subseteq e$.
\end{definition}

As mentioned already, for $t=1$ this is equivalent to the $\epsilon$-net notion, and for $t=\Theta(\epsilon \card V)$ this corresponds to the notion of $\epsilon$-Mnets. 
In this paper we study the following problem.

\begin{problem}
How small are the smallest $\epsilon$-$t$-nets for $H$? Can we compute them efficiently?
\end{problem}

\subsubsection*{Motivation.}
Instances of the $\epsilon$-$t$-net problem appear naturally in various contexts in computer science and combinatorics. For example, the following is a basic motivating example for \emph{secret sharing} \cite{Liu68,Shamir79}: \enquote{Eleven scientists are working on a secret project. They wish to lock up the documents in a cabinet so that the cabinet can be opened if and only if six or more of the scientists are present. What is the smallest number of locks needed?}. Consider a variant of this question in which the number of scientists is large. We still insist on the basic security condition -- that no less than six scientists can open the cabinet. On the other hand, due to the large number of scientists, we do not require that any six should be able to do so, but rather any sufficiently large group of a certain kind, e.g., at least one tenth of all scientists including a representative of each university involved.

The classical secret sharing methods (see, e.g., \cite{Beimel11}) distribute \enquote{keys} to subsets of 6 scientists so that any six scientists will be able to open the cabinet but no five will be able to do that. But as we require only certain groups of scientists to be able to open it, it is possible to distribute shared keys to only some of the 6-subsets. The questions: \enquote{What is the minimal number of 6-subsets we can achieve? and how can we choose the 6-subsets of scientists we distribute keys to?} are an instance of the $\epsilon$-$t$-net problem -- with $t=6$, $\epsilon=1/10$, and the hyperedges of the hypergraph being all groups of scientists that are required to be able to open the cabinet.

Other contexts in which the $\epsilon$-$t$-net problem appears (described in \cref{sec:applications}) include the Tur\'{a}n numbers of hypergraphs, $\chi$-boundedness of graphs, edge-coloring of hypergraphs and more.

\subsubsection*{Related work: $\epsilon$-Nets and Mnets.}

For any $t$, the minimum size of an $\epsilon$-$t$-net is sandwiched between the corresponding minimum sizes of $\epsilon$-nets and of Mnets. Indeed, given an Mnet, one obtains an $\epsilon$-$t$-net by picking one $t$-subset from each subset, and given an $\epsilon$-$t$-net, one obtains an $\epsilon$-net by taking one vertex from each $t$-subset. The survey \cite{MV17} has most known bounds on these objects. 

\subsection{Results}

\textbf{Notation}: we write $O_{x,y}(\cdot)$ when the implicit constants depend on parameters $x$ and $y.$ 

\subsubsection*{Hypergraphs of finite VC-dimension have small $\epsilon$-$t$-nets.} 
Our main result is an existence result for small $\epsilon$-$t$-nets.

\begin{restatable}{theorem}{mainlarget}
\label{thm:main_ht>2}
For every $\epsilon\in(0,1)$ and $t \in \mathbb{N}\setminus\{0\}$, every hypergraph on $\geq C_1 \left(\frac{t-1}{\epsilon}\right)^{d^*}$ vertices with VC-dimension $d$ and dual shatter function $\pi^*_H(m)\leq C m^{d^*}$ admits an $\epsilon$-$t$-net of size $O(\frac{d (1+\log t)}{\epsilon} \log \frac{1}{\epsilon})$, all elements of which are pairwise disjoint. Here $C_1=C_1(d^*,C)$.
\end{restatable}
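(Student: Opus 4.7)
I propose to adapt Haussler--Welzl's probabilistic proof of the $\epsilon$-net theorem in two stages: first sample a large ``universe'' $N\subseteq V$, then carve from $N$ a small family of pairwise disjoint $t$-blocks hitting every heavy hyperedge.

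In Stage~1 I draw $N\subseteq V$ uniformly at random with $|N|=r=\Theta(\frac{dt(1+\log t)}{\epsilon}\log\frac{1}{\epsilon})$. A Chernoff bound on $|e\cap N|$, combined with the Sauer--Shelah count $\pi_H(r)=O(r^d)$ of distinct traces $e\cap N$, shows that with positive probability $|e\cap N|\geq\epsilon r/2$ for every hyperedge $e$ with $|e|\geq\epsilon n$, so that each heavy hyperedge carries a large trace in $N$.

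Stage~2 aims to decompose $N$ into $|N|/t=O(\frac{d(1+\log t)}{\epsilon}\log\frac{1}{\epsilon})$ pairwise disjoint $t$-blocks such that every heavy $e$ fully contains at least one block. A uniformly random partition fails: a random $t$-subset of $N$ lies inside a density-$\epsilon$ trace with probability only $\sim\epsilon^t$, so the expected number of compliant blocks inside a single heavy $e$ is $(r/t)\epsilon^t$, vanishing for modest $t$. The partition must be \emph{structured}, and my proposal is an iterative doubling over $1+\log t$ rounds: round~$0$ is a standard $\epsilon$-net $\mathcal F_0$ of $r_0=O(\frac{d}{\epsilon}\log\frac{1}{\epsilon})$ singletons; in the transition $i\to i+1$, augment each $2^i$-block $s\in\mathcal F_i$ to a $2^{i+1}$-block by appending $2^i$ fresh vertices drawn from a host heavy hyperedge containing $s$, and add at most $O(r_0)$ further fresh $2^{i+1}$-blocks to cover those heavy $e$ that the augmentation orphans. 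Summing over rounds gives a final net of size $O(r_0(1+\log t))$, matching the claim.

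The hard part will be this transition lemma: that one can always augment a disjoint $\epsilon$-$2^i$-net to a disjoint $\epsilon$-$2^{i+1}$-net while adding only $O(r_0)$ fresh blocks. Its proof should combine the classical $\epsilon$-net theorem, applied to a \emph{residual} hypergraph on the yet-unused vertices, with the dual shatter function bound $\pi^*_H(m)\leq Cm^{d^*}$, which is what limits to $O(r_0)$ the number of ``types'' of orphaned heavy hyperedges per round. The size hypothesis $n\geq C_1((t-1)/\epsilon)^{d^*}$ is exactly what one needs to run the transition $\log t$ times without ever exhausting fresh vertices: the $t-1=\sum_{i=0}^{\log t-1}2^i$ there matches the total number of vertices eventually attached to each seed singleton across all rounds, and the exponent $d^*$ reflects the use of the dual shatter function on the residual hypergraphs.
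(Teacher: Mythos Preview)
Your proposal has a genuine gap: the ``transition lemma'' is the entire content of the argument, and the sketch you give for it does not work. When you augment a block $s\in\mathcal F_i$ using $2^i$ fresh vertices drawn from \emph{one} host heavy hyperedge, you potentially orphan every other heavy hyperedge that contained $s$; nothing prevents there being many such hyperedges per block. Your claim that the dual shatter bound $\pi^*_H(m)\leq Cm^{d^*}$ limits the ``types'' of orphaned hyperedges to $O(r_0)$ is unsubstantiated: the dual shatter function on $m$ points yields $O(m^{d^*})$ cells, not $O(m)$, and in any case bounding the number of cells does not give you a small $2^{i+1}$-net for the orphaned hyperedges. Covering the orphans with $O(r_0)$ fresh $2^{i+1}$-blocks is precisely an $\epsilon$-$2^{i+1}$-net problem for a sub-hypergraph---the very problem you are trying to solve---so the recursion is circular. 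Your reading of the hypothesis $n\geq C_1((t-1)/\epsilon)^{d^*}$ as ``enough fresh vertices for $\log t$ rounds'' is also off: running out of vertices is never the obstruction (you only ever consume $O(r)$ of them).

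The paper's proof is entirely different and short. It takes a spanning cycle $P$ on $V$ with crossing number $O_{C,d^*}(n^{1-1/d^*})$ (Welzl's theorem, which is where the dual shatter function is actually used), and lets the $t$-blocks be the $\lfloor n/t\rfloor$ consecutive runs of length $t$ along $P$. Any heavy hyperedge $e$ with $|e|\geq\epsilon n$ crosses $P$ only $O(n^{1-1/d^*})$ times, so it fully contains at least $\lfloor \epsilon n/t\rfloor - O(n^{1-1/d^*})\geq \epsilon n/(2t)$ of these blocks once $n\gtrsim ((t-1)/\epsilon)^{d^*}$---this is exactly where the size hypothesis enters. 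The blocks are the vertices of an auxiliary hypergraph $H^t_{lc}$ whose VC-dimension is at most $\gamma_t d=O(d\log t)$ (a separate Sauer--Shelah argument on $H^t$), so a classical $\frac{\epsilon}{2}$-net for $H^t_{lc}$, of size $O(\frac{d(1+\log t)}{\epsilon}\log\frac1\epsilon)$, is the desired $\epsilon$-$t$-net for $H$. The low-crossing spanning cycle is the missing structural idea that replaces your doubling scheme.
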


(The dual shatter function, described in \cref{sec:tuple}, is a property of the hypergraph such that we may always take $d^*< 2^{d+1}$.)

This bound is asymptotically tight when $t=O(1)$, in the sense that there exist hypergraphs for which any $\epsilon$-net, and consequently also any $\epsilon$-$t$-net, is of size $\Omega(\frac{1}{\epsilon} \log \frac{1}{\epsilon})$ \cite{KomPaW92}. The proof of Theorem~\ref{thm:main_ht>2} involves a surprising relation between the $\epsilon$-$t$-net problem and the existence of \emph{spanning trees with a low crossing number}, proved by Welzl in 1988 \cite{Welzl88}.

Hypergraphs with VC-dimension 1 admit $O(\frac{1}{\epsilon})$-sized $\epsilon$-nets \cite{KomPaW92} and $\epsilon$-Mnets \cite{DGJM19}. The latter fact yields the following result, albeit with worse constants. We offer a simple proof.

\begin{restatable}{theorem}{vcone}
\label{thm:vcone}
For every positive integer $t$ and $\epsilon\leq \frac 1 2$, every finite hypergraph on $\geq t \lceil\frac 1 \epsilon\rceil$ vertices with VC-dimension 1 admits an $\epsilon$-$t$-net of size at most $t\lceil \frac 1 \epsilon\rceil +1$.
\end{restatable}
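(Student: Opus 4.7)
Set $k = \lceil 1/\epsilon\rceil$. The plan is to produce the $\epsilon$-$t$-net in two stages: first, a standard $\epsilon$-net $N = \{u_1, \ldots, u_k\}$ of size $k$, whose existence for VC-dimension~1 hypergraphs is classical \cite{KomPaW92}; second, a \emph{thickening} step that replaces each $u_i \in N$ by at most $t$ $t$-subsets through $u_i$, chosen so that every large hyperedge containing $u_i$ contains one of these $t$-subsets. Summing over $i$ gives $\leq tk$ sets; an additional single $t$-subset absorbs a boundary case, yielding the advertised count $tk+1$.

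The crux of the argument is the thickening step. For fixed $u_i$, consider the family $\edges_i := \{e \in \edges : u_i \in e,\ \card{e} \geq \epsilon \card V\}$ and view it as a hypergraph on $V \setminus \{u_i\}$ via $e \mapsto e \setminus \{u_i\}$; this restricted hypergraph inherits VC-dimension at most $1$ from $H$. Using this structural rigidity I would produce a cover of $\edges_i$ by at most $t$ specific $(t-1)$-subsets of $V \setminus \{u_i\}$, each of which, together with $u_i$, gives a $t$-subset contained in the corresponding hyperedges. The guiding principle is that VC-dim~$1$ forbids any two remaining vertices $\{w, w'\}$ from being shattered, which forces $\edges_i$ into a nearly laminar (nested-or-disjoint) pattern amenable to such a small cover.

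I expect to establish the covering lemma by a greedy peeling: at each step pick a minimum-cardinality edge left in $\edges_i$, extract a $(t-1)$-subset from it, and observe that any remaining edge must (by the forbidden-shattering constraint) be either disjoint from or nested with the one just peeled, so that each step cuts the problem down significantly. The size of the $\epsilon$-net $N$ then bounds the number of independent ``pivot'' choices, and the thickening produces at most $t$ $t$-subsets per pivot; the ``$+1$'' absorbs either a $(k+1)$-th vertex needed for the $\epsilon$-net or a residual exceptional edge at the end of the peeling.

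The main obstacle is making the covering lemma quantitatively sharp — showing that \emph{exactly} $t$ rather than more $(t-1)$-subsets suffice per pivot. A looser bound would give $O(tk)$ rather than the claimed $tk+1$. Precision here requires exploiting VC-dimension $1$ not merely through Sauer--Shelah counting (which gives $\card\edges \leq n+1$) but via the local laminarity around each $u_i$. Once this local bound is in place, the remainder is straightforward bookkeeping and reduces to checking the size inequality $tk + 1 \leq t\lceil 1/\epsilon\rceil + 1$.
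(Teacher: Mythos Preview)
Your plan diverges from the paper's and carries a real gap: the per-pivot covering lemma is false as stated. For $t=2$, let $V=\{u\}\cup D_1\cup D_2\cup D_3$ with the $D_j$ pairwise disjoint of size $\lfloor(n-1)/3\rfloor$, and take as hyperedges only $e_j=\{u\}\cup D_j$. This hypergraph has VC-dimension $1$, each $e_j$ is $\tfrac13$-heavy, and $u$ lies in every hyperedge. At the pivot $u$ you would need two vertices of $V\setminus\{u\}$ meeting all three pairwise disjoint sets $D_j$, which is impossible; so ``at most $t$ $(t-1)$-subsets per pivot'' already fails for $t=2$. One can dodge this particular instance by placing the net vertices inside the $D_j$ instead, but nothing in your outline tells you how to choose $N$ so that the per-pivot count works in general, and the ``nearly laminar'' heuristic is misleading: VC-dimension $1$ does not force the members of $\edges_i$ to be nested or disjoint (e.g.\ on $V\setminus\{u\}=\{1,\dots,m\}$ the three sets $\{v:v\not\equiv j\pmod 3\}$, $j=0,1,2$, form a VC\nobreakdash-dimension-$1$ system with no two nested or disjoint). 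Even if some refinement of your scheme could reach the global bound $tk+1$, the argument as written does not establish it.

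The paper avoids per-pivot analysis entirely. It first builds a single set $N_t\subseteq V$ of size $t\lceil 1/\epsilon\rceil$ that meets every heavy hyperedge in at least $t$ points, by taking an $\epsilon$-net $N_1$ of size $\lceil 1/\epsilon\rceil$, then an $\epsilon$-net of the same size for the induced hypergraph on $V\setminus N_1$ (the heavy edges not yet hit $i$ times retain relative density $\geq\epsilon$ there), and iterating $t$ times. It then applies Sauer's lemma \emph{once, globally}: since $\VC H=1$, the trace $\Pi_H(N_t)$ has at most $\card{N_t}+1=t\lceil 1/\epsilon\rceil+1$ members, and one picks an arbitrary $t$-subset from each trace of size $\geq t$. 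Every heavy $e$ satisfies $\card{e\cap N_t}\geq t$, so the $t$-subset chosen for its trace lies in $e$. The ``$+1$'' is thus exactly the $+1$ in the Sauer bound for VC-dimension $1$, not an ad hoc boundary patch.
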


\subsubsection*{An efficient explicit construction of $\epsilon$-$t$-nets.}

Our second result is a new explicit construction of $\epsilon$-$t$-nets, for all $t \geq 1$.
The case of $t=1$ (i.e., $\epsilon$-nets) is of independent interest, as in this case our construction does not follow the proof strategy of Haussler and Welzl and does not use derandomization (unlike all previously known explicit constructions of $\epsilon$-nets). On the other hand, it has a sub-optimal size of $O_d(\frac{1}{\epsilon^d})$, where $d$ is the VC-dimension of the underlying hypergraph.

For a higher $t$, we introduce a new parameter of the hypergraph, which we call the \emph{$t$-VC-dimension}. For hypergraphs of $t$-VC-dimension $d$, we construct $\epsilon$-$t$-nets of size $O_{d}(\frac{1}{\epsilon^{d +t -1}})$. We give some first results on the relation between this new parameter and the standard VC-dimension.

\subsubsection*{Small $\epsilon$-2-nets for geometric hypergraphs.} 

In view of \cref{thm:main_ht>2}, which shows that for hypergraphs with a constant VC dimension one can obtain an $\epsilon$-$t$-net of roughly the same size as the smallest $\epsilon$-net, it is natural to ask whether a similar result can be achieved for geometrically-defined hypergraphs that admit an $\epsilon$-net of size $O(\frac{1}{\epsilon})$. We obtain such results for several geometrically-defined hypergraphs in $\Re^2$, including the intersection hypergraph of two families of pseudo-disks and the dual hypergraph of a family of regions with linear union complexity. Namely, we show that these hypergraphs have $O(\frac{1}{\epsilon})$-sized $\epsilon$-2-nets provided they have $\Omega(\frac{1}{\epsilon})$ vertices.
Interestingly, in some scenarios the minimum size of an $\epsilon$-2-net is sensitive to the exact multiplicative constant: there are subhypergraphs on $\Theta(\frac{1}{\epsilon})$ vertices for which any $\epsilon$-2-net is of size $\Omega(\frac{1}{\epsilon^2})$. 

\section{Construction of Auxiliary Hypergraphs}
\label{sec:tuple}

\subsection{Some preparatory results}

\subsubsection*{Sauer's lemma.}
Given a hypergraph $H=(V,\edges)$ the \emph{trace (also known as projection or restriction) of $H$ on $A\subseteq V$} is  $\Pi_H(A)=\{A \cap e \colon e \in \edges\}$; shattered subsets are those for which $\Pi_H(A)=2^A$.
The shatter function of $H$ is
\[\pi_H\colon n\in \Nat\mapsto \max\{  \card{\Pi_H(A)} : A\subseteq V,\ \card{A}\leq n \}.\]
It is bounded by the \emph{Sauer--Shelah lemma}:

\begin{lemma}[\cite{VC71,Sau72,She72}]\label{lm:sauer}
  If $\VC H = d$ then $\pi_H(n) \leq \binom{n}{0}+\binom{n}{1}+ \dots +\binom{n}{d}$.
  In particular, for $ 1\leq d \leq n$ one has $\pi_H(n) \leq (\frac{e} d)^d \cdot n^d$, where $e$ is Euler's number. 
\end{lemma}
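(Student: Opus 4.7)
The plan is to first reduce the claim to a purely combinatorial statement: for any family $\mathcal{F}\subseteq 2^V$ with $\card V=n$, if no subset of size exceeding $d$ is shattered by $\mathcal{F}$, then $\card{\mathcal{F}}\leq \sum_{i=0}^{d}\binom{n}{i}$. Applying this to $\mathcal{F}=\Pi_H(A)$ for every $A\subseteq V$ with $\card{A}\leq n$, and noting that any subset of $A$ shattered by $\Pi_H(A)$ is also shattered by $H$ (hence of cardinality at most $d$), immediately yields the desired bound on $\pi_H(n)$.

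For the combinatorial statement I would invoke \emph{Pajor's inequality}: every finite family $\mathcal{F}\subseteq 2^V$ shatters at least $\card{\mathcal{F}}$ subsets of $V$. I would prove this by induction on $\card V$. The base case $V=\emptyset$ is immediate. For the inductive step, fix $v\in V$ and split $\mathcal{F}$ into $\mathcal{F}_0=\{F\in \mathcal{F}: v\notin F\}$ and $\mathcal{F}_1=\{F\setminus\{v\}: v\in F\in \mathcal{F}\}$, both viewed as families over $V\setminus\{v\}$; these are disjoint in their descriptions and $\card{\mathcal{F}_0}+\card{\mathcal{F}_1}=\card{\mathcal{F}}$. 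Let $\mathcal{S}_0,\mathcal{S}_1$ denote the collections of subsets of $V\setminus\{v\}$ shattered by $\mathcal{F}_0,\mathcal{F}_1$; by the inductive hypothesis $\card{\mathcal{S}_i}\geq \card{\mathcal{F}_i}$. Every set in $\mathcal{S}_0\cup \mathcal{S}_1$ is shattered by $\mathcal{F}$, and every set of the form $A\cup\{v\}$ with $A\in \mathcal{S}_0\cap \mathcal{S}_1$ is also shattered by $\mathcal{F}$, since each subset of $A$ arises as a trace from $\mathcal{F}_0$ and each subset of $A$ arises, with $v$ adjoined, from the hyperedges containing $v$. The latter collection consists of sets containing $v$ and the former of sets not containing $v$, so they are disjoint in $2^V$; hence $\mathcal{F}$ shatters at least $\card{\mathcal{S}_0\cup \mathcal{S}_1}+\card{\mathcal{S}_0\cap \mathcal{S}_1}=\card{\mathcal{S}_0}+\card{\mathcal{S}_1}\geq \card{\mathcal{F}}$ sets. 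Since by hypothesis no shattered set exceeds cardinality $d$, this gives $\card{\mathcal{F}}\leq \sum_{i=0}^{d}\binom{n}{i}$, proving the first estimate.

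The ``in particular'' bound is the standard binomial estimate $\sum_{i=0}^{d}\binom{n}{i}\leq (en/d)^d$ valid for $1\leq d\leq n$: since $(d/n)^d\leq 1$, multiplying the left-hand side by $(d/n)^d$ and extending the sum gives $(d/n)^d\sum_{i=0}^{d}\binom{n}{i}\leq \sum_{i=0}^{n}\binom{n}{i}(d/n)^i=(1+d/n)^n\leq e^d$.

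The main obstacle is the combinatorial bookkeeping in the inductive step of Pajor's inequality, namely verifying both the disjointness of $\mathcal{S}_0\cup \mathcal{S}_1$ and $\{A\cup\{v\}:A\in \mathcal{S}_0\cap \mathcal{S}_1\}$ inside $2^V$ and the inclusion--exclusion identity that converts $\card{\mathcal{S}_0\cup \mathcal{S}_1}+\card{\mathcal{S}_0\cap \mathcal{S}_1}$ back into $\card{\mathcal{S}_0}+\card{\mathcal{S}_1}$. Once that is in hand, everything else is routine.
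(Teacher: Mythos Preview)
Your proof is correct. The Pajor-inequality route (every family shatters at least as many sets as it has members) is one of the standard modern proofs of the Sauer--Shelah lemma, and your inductive bookkeeping is sound: the split $\mathcal{F}_0,\mathcal{F}_1$ indeed satisfies $\card{\mathcal{F}_0}+\card{\mathcal{F}_1}=\card{\mathcal{F}}$, the sets in $\mathcal{S}_0\cup\mathcal{S}_1$ and those of the form $A\cup\{v\}$ with $A\in\mathcal{S}_0\cap\mathcal{S}_1$ are shattered by $\mathcal{F}$ and are disjoint in $2^V$, and inclusion--exclusion finishes it. The binomial tail estimate is also the standard one-line argument.

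There is nothing to compare against, however: the paper does not prove \cref{lm:sauer}. It is stated with citations to \cite{VC71,Sau72,She72} and used as a black box throughout (in the proof of \cref{thm:vc_tuples}, in \cref{thm:vcone}, and in \cref{prop:2vs1}). So your proposal supplies a proof where the paper simply quotes the literature.
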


\subsubsection*{Binary entropy function.} This is $h\colon x \in (0, 1) \mapsto -x \log x -(1-x) \log (1-x)$. (All logarithms are binary. See \cref{fig:entropy}.) We will use the following inequality.
\begin{equation}
    \label{cl:vc_pairs}
\forall \alpha \in \left(0, \frac{1}{2}\right],\ \forall n\in\Nat, \quad \log{\sum_{i=0}^{\lfloor \alpha n\rfloor} \binom{n}{i}} \leq {n h(\alpha)}.
\end{equation}

\begin{proof}
For the sake of simplicity, we assume $\alpha n \in \mathbb{N}$. By the binomial theorem,
\begin{align*}
  1=(\alpha+(1-\alpha))^n 
  & = \sum_{i=0}^{n} \binom{n}{i}  \alpha^i  (1-\alpha)^{n-i}\\
  & \geq \sum_{i=0}^{\alpha n} \binom{n}{i}  \alpha^i  (1-\alpha)^{n-i}=
  \sum_{i=0}^{\alpha n} \binom{n}{i}  (1-\alpha)^{n}    \left(\frac{\alpha}{1-\alpha}\right)^i \\
  &\geq \sum_{i=0}^{\alpha n} \binom{n}{i}  (1-\alpha)^{n}    \left(\frac{\alpha}{1-\alpha}\right)^{\alpha n} \quad \text{since $0<\frac{\alpha}{1-\alpha}<1$} \\ 
  &= 2^{-n \cdot h(\alpha)} \cdot \sum_{i=0}^{\alpha n} \binom{n}{i}.\qedhere
\end{align*}
\end{proof}

The binary entropy function restricted to $(0,\frac 1 2]$ is invertible, and \cite[Th.\@ 2.2]{Calabro}:
\begin{equation}
\forall x\in(0, 1), \quad \frac{x}{2 \log \frac 6 x} \leq h^{-1}(x) \leq \frac{x}{\log \frac 1 x}.
\end{equation} 

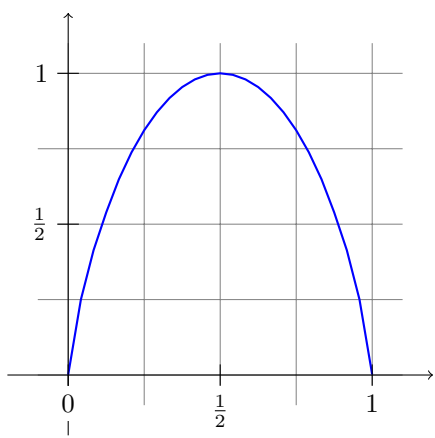
\begin{figure}
    \centering
\begin{tikzpicture}[domain=0.0001:0.9999,scale=4]
\draw[very thin,color=gray] (-0.1,-0.1) grid[step=0.25] (1.1,1.1);
\draw[->] (-0.2,0) -- (1.2,0); \draw[->] (0,-0.2) -- (0,1.2);
\draw[color=blue,thick]   plot (\x,{-\x*log2(\x) - (1-\x)*log2(1-\x)});
\foreach \x/\xtext in {0, 0.5/\frac{1}{2}, 1} 
  \draw (\x,1pt) -- (\x,-1pt) node[anchor=north,fill=white] {$\xtext$};
  \foreach \x/\xtext in {0.5/\frac{1}{2}, 1} 
  \draw (1pt,\x) -- (-1pt,\x) node[anchor=east,fill=white] {$\xtext$};
\end{tikzpicture}
    \caption{The binary entropy function.}
    \label{fig:entropy}
\end{figure}

\subsection{A first hypergraph on \texorpdfstring{$t$-subsets}{t-subsets}}
\begin{definition}
Given a hypergraph $H=(V,\edges)$ and a positive integer $t$, let $H^t$ be the hypergraph $(V^t,\edges^t)$ where $V^t=\binom{V}{t}$ and $\edges^t=\big\{\binom{e}{t}\colon e\in \edges\big\}$. That is, its vertices are all $t$-element subsets of $V$ and each hyperedge of $H^t$ consists of all such subsets contained in a given hyperedge of $H$. 
\end{definition}

For $t\in\Nat\setminus\{0,1\}$, let $\gamma_t = (t h^{-1}(1/t))^{-1}$. Note that $\log t \leq \gamma_t\leq 2 \log 6t$.

\begin{proposition}
\label{thm:vc_tuples}
If $H$ is a hypergraph with $\VC H=d$ then $d-t+1 \leq \VC H^t \leq \gamma_t d$.
\end{proposition}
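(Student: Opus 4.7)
The plan is to prove the two bounds separately, with the lower bound following from a direct construction and the upper bound following from Sauer--Shelah combined with the entropy estimate~\eqref{cl:vc_pairs}.

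For the lower bound $\VC H^t \geq d-t+1$, I would start with a shattered set $A = \{v_1,\dots,v_d\} \subseteq V$ in $H$, fix any $t-1$ of its elements, say $v_1,\dots,v_{t-1}$, and look at the $d-t+1$ many $t$-subsets
\[
s_i = \{v_1,\dots,v_{t-1},v_{t-1+i}\}, \quad i = 1,\dots,d-t+1.
\]
To shatter $\{s_1,\dots,s_{d-t+1}\}$ in $H^t$, I would, for each $I\subseteq \{1,\dots,d-t+1\}$, use the shattering of $A$ in $H$ to pick a hyperedge $e\in \edges$ with $e \cap A = \{v_1,\dots,v_{t-1}\} \cup \{v_{t-1+i} : i \in I\}$. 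Then $s_i \subseteq e$ iff $v_{t-1+i} \in e$ iff $i \in I$, i.e.\@ $\binom{e}{t} \cap \{s_1,\dots,s_{d-t+1}\} = \{s_i : i \in I\}$, as required.

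For the upper bound, let $A \subseteq V^t$ be any set shattered by $\edges^t$, and let $N = \bigcup_{s \in A} s \subseteq V$, so $\card N \leq t \card A$. The key observation is that if $B \subseteq A$ and $e\in\edges$ is a hyperedge with $\binom{e}{t}\cap A = B$, then $B$ is determined by the trace $N\cap e$ alone (namely $B = \{s \in A : s \subseteq N\cap e\}$, since every $s\in A$ is a subset of $N$). Hence the map $B \mapsto N\cap e_B$ is injective, giving
\[
2^{\card A} \;\leq\; \card{\Pi_H(N)} \;\leq\; \pi_H(t\card A).
\]
Applying \cref{lm:sauer} to bound $\pi_H(t\card A) \leq \sum_{i=0}^{d}\binom{t\card A}{i}$ and then \eqref{cl:vc_pairs} with $\alpha = d/(t\card A)$ (valid when $\alpha \leq 1/2$, i.e.\@ $\card A \geq 2d/t$) yields
\[
\card A \;\leq\; t\card A \cdot h\!\left(\frac{d}{t\card A}\right),
\]
so $h(d/(t\card A)) \geq 1/t$, and, since $h$ is increasing on $(0,1/2]$, this rearranges to $\card A \leq d/(t h^{-1}(1/t)) = \gamma_t d$.

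The only subtlety is the edge case where $\card A < 2d/t$, in which \eqref{cl:vc_pairs} does not directly apply; but then $\card A < 2d/t \leq \gamma_t d$ already, because $2 h^{-1}(1/t) \leq 1$. I do not expect a serious obstacle here; the main delicate point is noticing that shattering $A$ inside $H^t$ produces sufficiently many distinct \emph{traces} of $H$ on the vertex set $N$ of cardinality at most $t\card A$, which is exactly what lets Sauer--Shelah be used nontrivially when $t \geq 2$.
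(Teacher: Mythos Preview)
Your proof is correct and follows the same overall strategy as the paper: the lower bound via the explicit family of $t$-subsets $\{v_1,\dots,v_{t-1},v_j\}$ is exactly the paper's construction, and the upper bound proceeds, as in the paper, by observing that a set $A$ shattered in $H^t$ forces at least $2^{\card A}$ distinct traces of $H$ on the union $N=\bigcup_{s\in A}s$, and then applying Sauer--Shelah together with the entropy bound~\eqref{cl:vc_pairs}.

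Where you differ is in the bookkeeping for the upper bound. The paper keeps track of the actual size $\card N=\beta\,\card A$ with $1<\beta\le t$, obtains $1\le \beta\,h\!\bigl(d/(\beta\,\card A)\bigr)$, and then needs the monotonicity of $x\mapsto h(x)/x$ to reduce to the worst case $\beta=t$ and reach a contradiction. You instead immediately relax to the bound $\card N\le t\card A$, apply~\eqref{cl:vc_pairs} with $\alpha=d/(t\card A)$, and invert $h$ directly to get $\card A\le d\bigl/(t\,h^{-1}(1/t))=\gamma_t d$. This is a genuine simplification: it loses nothing in the final bound, avoids introducing $\beta$ and the auxiliary monotonicity step, and handles the edge case $\card A<2d/t$ with the one-line observation $2/t\le \gamma_t$. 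The only small thing you might add for completeness is the remark that the case $t=1$ is trivial ($H^1=H$) and that the lower bound is vacuous when $d<t$.
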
 

\begin{proof}
We assume that $t\geq2$, as for $t=1$, $\VC H^t = \VC H^1 =\VC H =d$. 

To prove the left inequality, let $\{v_1,\ldots,v_d\}$ be a shattered subset of vertices in $H$, with $d\geq t-1$. There are $d-t+1$ sets containing all vertices in $\{v_1,v_2,\dots,v_{t-1}\}$ and exactly one in $\{v_t,v_{t+1},\dots v_d\}$. It is easy to see that they form a shattered subset in $H^t$.

For the right inequality, suppose to the contrary that $P$ is a shattered set in $H^t$ with $d'=\card{P}>\gamma_t d$. Let $S=\cup_{p\in P}p$; clearly $\card{S} \leq td'$. Observe also that $d'+t-1 \leq \card S$. If this were not the case there would exist some $p_1\in P$ such that $p_1\subseteq \cup_{p\in P\setminus\{p_1\}} p$, which would contradict the fact that $P$ is shattered. 

We denote $\card{S}=\beta d'$; we have $1<\beta \leq t$.

Since $P$ is shattered in $H^t$, each $P_1\subseteq P$ is of the form $P\cap \binom{e} t= \{ p \in P \colon p \subseteq (S\cap e)\}$ for some $e\in \edges$. Thus $\card{\Pi_H(S)}\geq \card{2^P}=2^{d'}$.

On the other hand, $\VC H =d$, and so by \cref{lm:sauer}, $\card{\Pi_H(S)} \leq \binom{\beta d'}{0}+\binom{\beta d'}{1}+ \dots +\binom{\beta d'}{d}$.
It follows from \cref{cl:vc_pairs} (with $\beta d' \geq \beta \gamma_t d > 2d$)  that  $d'\leq\log \card{\Pi_H(S)} \leq   \beta d' h(\frac{d}{\beta d'})$. 
 We show that $1 > \beta h(\frac{d}{\beta d'})$, a contradiction. 

Note that $\frac{1}{t\gamma_t} \leq \frac{1}{\gamma_t \beta} < \frac 1 2$. Since $t\mapsto\frac{h(t)}{t}$ is monotone decreasing in the range $(0,1)$, we have $\gamma_t\beta \cdot h(\frac{1}{\gamma_t \beta})\leq t\gamma_t \cdot h(\frac 1 {t\gamma_t})=\gamma_t$. As $h$ is increasing on $(0,\frac 1 2)$, it follows that $\beta h(\frac{d}{ \beta d'})<\beta h(\frac 1 {\beta \gamma_t})\leq 1$. 
\end{proof}

\Cref{thm:vc_tuples} allows us to slightly improve the \enquote{trivial} upper bound of $O(\frac{d^t}{\epsilon^t}(\log \frac{1}{\epsilon})^t)$ on the minimum size of an $\epsilon$-$t$-net for any hypergraph with constant VC-dimension.
\begin{corollary}
\label{thm:UB}
Let $H$ be a hypergraph on $n$ vertices with VC-dimension $d$. For any $t,\epsilon$ such that $n \geq \frac{t}{\epsilon}$, $H$ admits an $\epsilon$-$t$-net of size $O(\frac{d t(1+\log t)}{\epsilon^t} \log \frac{1}{\epsilon})$.
\end{corollary}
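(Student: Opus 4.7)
The plan is to apply the classical $\epsilon$-net theorem to the auxiliary hypergraph $H^t$ from \cref{sec:tuple}, with a suitably small parameter. Since $H^t$ has $\binom n t$ vertices and hyperedges of the form $\binom e t$, every hyperedge of $H$ with $\card e\geq \epsilon n$ produces one in $H^t$ of relative size at least
\[\epsilon^* := \binom{\lceil \epsilon n\rceil}{t}\bigg/\binom{n}{t}.\]
Moreover, by \cref{thm:vc_tuples} we have $\VC H^t \leq \gamma_t d = O(d(1+\log t))$.

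An $\epsilon^*$-net $S\subseteq \binom V t$ for $H^t$ is automatically an $\epsilon$-$t$-net for $H$: whenever $e\in\edges$ satisfies $\card e\geq \epsilon n$, the set $S$ contains some $t$-subset $s\in \binom e t$, i.e.\ $s\subseteq e$. The classical $\epsilon$-net theorem therefore yields such an $S$ of size $O(d(1+\log t)/\epsilon^* \cdot \log(1/\epsilon^*))$.

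It remains to estimate $\epsilon^*$ under the hypothesis $n\geq t/\epsilon$. Writing the ratio as the telescoping product $\prod_{i=0}^{t-1}\frac{\lceil \epsilon n\rceil-i}{n-i}$ and bounding each factor from below by a constant multiple of $\epsilon$ gives $1/\epsilon^* = O(1/\epsilon^t)$, and hence $\log(1/\epsilon^*) = O(t\log(1/\epsilon))$. Substituting these estimates into the size bound yields the claimed $O(dt(1+\log t)/\epsilon^t\cdot \log(1/\epsilon))$.

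The only real obstacle is the lower bound on $\epsilon^*$ when $n$ is close to $t/\epsilon$, where the telescoping loses a $t$-dependent prefactor. I would handle this either by absorbing that prefactor into the implicit constant or by splitting into the case $n\geq 2t/\epsilon$ (where $\epsilon^*\geq (\epsilon/2)^t$ by a direct telescoping) and the leftover range $t/\epsilon\leq n<2t/\epsilon$ (where the trivial family $\binom V t$ already gives an $\epsilon$-$t$-net of acceptable size).
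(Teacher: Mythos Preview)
Your approach is exactly the paper's: apply the classical $\epsilon$-net theorem to $H^t$ using \cref{thm:vc_tuples}, then note that the resulting net is an $\epsilon$-$t$-net for $H$. The paper's one-line proof simply asserts that an $\epsilon^t$-net for $H^t$ suffices and leaves the arithmetic implicit; your version with $\epsilon^*$ and the case split near $n=t/\epsilon$ is more careful than the original, not different from it.
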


Indeed, observe that an $\epsilon^t$-net for $H^t$ is an $\epsilon$-$t$-net for $H$, and apply the classical $\epsilon$-net theorem to $H^t$.

\subsection{A smaller, well-behaved hypergraph on \texorpdfstring{$t$}{t}-subsets}
A \emph{spanning cycle} $P$ for $H=(V,\edges)$ is a cycle graph on $V$ that visits all vertices (exactly once). For $e\in\edges$, let $\operatorname{cr}(P,e)$ be the number of edges of $P$ with one endpoint in $e$ and the other in $V\setminus e$. The \emph{crossing number} of $P$ with respect to $H$ is $\sup\{\operatorname{cr}(P,e) \colon e\in\edges\}$. 

The \emph{dual hypergraph} of $H$ is $H^*=(\edges,\edges^*)$, where $\edges^*$ consists of all hyperedges $v^*=\{e \in \edges \colon v \in e\}$ for $v\in V$. Its shatter function is the \emph{dual shatter function} of $H$, and is denoted by $\pi^*_H$.

If $\VC H =d$ then $\VC H^* \leq 2^{d+1}$ \cite{Assouad}, and hence $\pi^*_H(m) \leq C_d m^{2^{d+1}}$ for every positive $m$, where $C_d$ is a constant depending on $d$. In particular, any hypergraph with finite VC-dimension satisfies the hypotheses of the following theorem.

\begin{theorem}[{\cite[Lemma 3.3 and Theorem 4.2]{Welzl88}}]
\label{thm:lc_tuples}
Let $H$ be a hypergraph on $n$ vertices such that $\pi^*_H(m)\leq Cm^d$ for some constants $C>0$ and $d>1$. Then there exists another constant $C_1$ (depending on $C$ and $d$) and a spanning cycle for $H$ with crossing number $\leq C_1 n^{1-\frac{1}{d}}$.
\end{theorem}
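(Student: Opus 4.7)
The plan is to prove \cref{thm:lc_tuples} by the standard multiplicative-weights / iterative reweighting scheme that became canonical after Welzl's paper: first build a spanning tree $T$ of $H$ with small crossing number edge by edge, and then convert $T$ to a spanning cycle at a factor-$2$ loss.

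First I would initialize a weight $w_e = 1$ on every hyperedge $e\in\edges$ and, for a pair $u,v\in V$, define the weighted separation $\mathrm{sep}(u,v) = \sum_{e\colon \card{e\cap\{u,v\}}=1} w_e$ with total weight $W=\sum_e w_e$. The key auxiliary statement I need is a \emph{low-crossing edge lemma}: for any nonnegative weights on $\edges$ there exist two distinct vertices $u,v\in V$ with
\[ \mathrm{sep}(u,v) \;\leq\; \alpha \, W\, n^{-1/d},\]
for a constant $\alpha=\alpha(C,d)$. I would prove this by a random-sampling argument: draw a multiset $R$ of $m\approx n^{1/d}$ hyperedges with replacement, where each draw picks $e$ with probability $w_e/W$. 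Two vertices belonging to the same equivalence class of the trace $\Pi_{H^*}(R)$ (i.e.\@ lying in the same subset of the hyperedges of $R$) have separation $0$ with respect to $R$; the dual shatter function bound $\pi^*_H(m)\leq Cm^d$ guarantees at most $Cm^d$ such classes, so by pigeonhole one class holds $\geq n/(Cm^d)$ vertices, and averaging over the random choice of $R$ converts this into the claimed expected bound.

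Next I would build a spanning tree $T$ iteratively over $n-1$ rounds. In each round pick, among all pairs $(u,v)$ whose endpoints lie in distinct components of the current forest, one of minimum weighted separation. Apply the low-crossing edge lemma to the hypergraph restricted to the quotient by the current components (the dual shatter function only decreases under contraction) to get separation $\leq \alpha W_i n_i^{-1/d}$, where $n_i$ is the current number of components. Add the selected edge to $T$ and \emph{double} $w_e$ for every hyperedge $e$ that the new edge crosses. Using $1+x\leq e^x$, the final total weight satisfies
\[ W_{\mathrm{final}} \;\leq\; \card\edges \cdot \exp\!\left(\alpha\sum_{i=2}^{n}i^{-1/d}\right) \;\leq\; \card\edges \cdot \exp\!\left(\alpha' n^{1-1/d}\right).\]
On the other hand, each individual hyperedge $e$ had its weight multiplied by exactly $2^{\operatorname{cr}(T,e)}$, so
\[ 2^{\operatorname{cr}(T,e)} \;\leq\; W_{\mathrm{final}} \;\implies\; \operatorname{cr}(T,e) \;=\; O\!\bigl(n^{1-1/d} + \log\card\edges\bigr).\]
By Sauer's lemma applied to the dual hypergraph, $\card\edges \leq \pi^*_H(n)^{O(1)} = n^{O(1)}$ (or else we may reduce to the case of distinct hyperedges beforehand), so the $\log\card\edges$ term is absorbed into $O(n^{1-1/d})$, which yields the crossing-number bound for the tree.

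Finally I would convert $T$ into a spanning cycle: double every edge of $T$ to obtain an Eulerian multigraph, traverse an Eulerian circuit, and shortcut to a Hamiltonian cycle $P$ by skipping already-visited vertices. Each shortcut chord of $P$ follows a subpath of $T$, so for every hyperedge $e$ one has $\operatorname{cr}(P,e)\leq 2\operatorname{cr}(T,e)$, preserving the asymptotic bound. The main obstacle is the low-crossing edge lemma: getting from $\pi^*_H(m)\leq Cm^d$ to the $n^{-1/d}$ separation bound requires choosing the sample size $m$ so that the pigeonhole class is large enough while the weighted-expectation argument still yields the right factor, and carrying the constants through gives the $C_1=C_1(C,d)$ in the theorem. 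Everything else is bookkeeping.
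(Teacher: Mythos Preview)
The paper does not give its own proof of \cref{thm:lc_tuples}: the theorem is quoted from Welzl~\cite{Welzl88}, with the parenthetical remark that the extra $\log n$ factor in Welzl's original bound was later removed in~\cite[Sec.~5.4]{Ma99}. So there is no in-paper proof to compare against; your outline is precisely the standard multiplicative-weights argument from those sources.

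Two technical points in your sketch deserve tightening. First, the random-sampling proof of the low-crossing edge lemma, as you describe it, only yields $\mathrm{sep}(u,v)\leq \alpha W (\log n)\,n^{-1/d}$. With $m\approx (n/C)^{1/d}$ samples, pigeonhole guarantees some unseparated pair, so $\sum_{\{u,v\}}(1-\mathrm{sep}(u,v)/W)^m\geq 1$; extracting a single pair from this sum costs a factor $\binom n 2$, giving minimum separation $O(W(\log n)/m)$. Propagated through the doubling, this produces crossing number $O(n^{1-1/d}\log n)$, which is exactly Welzl's original bound. Removing the $\log n$ requires a sharper ingredient (Haussler's packing lemma, or the Chazelle--Welzl refinement), and that is precisely the improvement the paper defers to~\cite{Ma99}.

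Second, the claim ``the dual shatter function only decreases under contraction'' is false in general: contracting a set of vertices to one vertex replaces their intersection patterns with $m$ hyperedges by the bitwise OR, which can create \emph{new} patterns and increase $\pi^*$. The clean fix is to keep one representative vertex per component and \emph{restrict} $H$ to that $n_i$-element subset; restriction to a vertex subset can only delete dual hyperedges, so $\pi^*$ cannot grow, and any low-crossing pair you then find automatically joins two distinct components. The remaining bookkeeping (weight doubling, bounding $W_{\mathrm{final}}$, Euler-tour shortcutting to a cycle) is fine.
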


(An additional $\log n$ factor in Welzl's original result was later removed \cite[Sec.\@ 5.4]{Ma99}. Up to constant factors, this theorem is equivalent to the same result for paths or trees.)

\begin{definition}
Let $H=(V,\edges)$ be a finite hypergraph with $\pi^*_H(m)\leq Cm^d$. Let $P$ be a spanning cycle for $H$ whose crossing number is minimal (and thus $\leq C_1 \card{V}^{1-\frac 1 d}$). Fix an arbitrary starting point $v_0\in P$ and orientation of $P$. For $0\leq i < \card V$, let $v_i\in V$ be the $i$-th vertex along $P$. Let $V^t_{lc}= \{ \{v_{kt},v_{kt+1},\dots, v_{kt + t -1}\} \colon 0 \leq k <\lfloor \frac {\card V} t\rfloor\}\subsetneq \binom V t$ (where the subscript $lc$ stands for low crossing). Observe that its elements are pairwise disjoint. Let $H^t_{lc}$ be the hypergraph on $V^t_{lc}$ whose hyperedges are of the form $\{  v \in V^t_{lc} \colon v \subseteq e \}$ for each $e\in \edges$.
\end{definition}

\begin{remark}
\label{rem:dim<=dim}
In order to make $H^t_{lc}$ uniquely defined, $P$ is chosen arbitrarily from all suitable spanning cycles. As $H^t_{lc}$ is a subhypergraph of $H^t$, $\VC H^t_{lc} \leq \VC H^t$, and thus we also have $ \VC H^t_{lc} \leq \gamma_t \VC H$.
\end{remark}

\section{Existence of Small \texorpdfstring{$\epsilon$-$t$-Nets}{\textepsilon-t-Nets}}

\label{sec:opt_eps_2_net}

\mainlarget*

\begin{proof} For $t=1$, this is simply the $\epsilon$-net theorem. For higher $t$, let $H=(V,\edges)$ be such a hypergraph and $n=\card V$. Consider the hypergraph $H^t_{lc}$ defined in \cref{sec:tuple}. It has $\lfloor \frac n t\rfloor$ vertices and VC-dimension $\leq \gamma_t d$ (by Remark~\ref{rem:dim<=dim}), and thus admits an $\frac{\epsilon}{2}$-net of size  $O(\frac{\gamma_t d}{\epsilon} \log \frac{1}{\epsilon})$. We claim that any such $\frac \epsilon 2$-net $N\subseteq \binom V t$ is also an  $\epsilon$-$t$-net for $H$.

Indeed, the crossing number of the associated spanning cycle is $O_{C,d^*}(n^{1-1/d^*})$. Every hyperedge $e$ of $H$ with $\card{e} \geq \epsilon n$ fully contains at least $\lfloor \frac{\epsilon n}t \rfloor - O_{C,d^*}(n^{1-1/d^*})$ elements of $V^t_{lc}$, which is $\geq \frac{\epsilon n}{2t}$ as soon as $n= \Omega_{C,d^*}(\frac t \epsilon n^{1 - 1/d^*})$, or equivalenty (noting also that $2(t-1) \geq t$ for $t\geq 2$) when $n=\Omega_{C,d^*}({\frac {t-1} \epsilon}^{d^*})$.  One of these $t$-subsets is in $N$.
\end{proof}

\begin{remark}
In general, some fast growth of $n=\card V$ as a function of $\frac 1 \epsilon$ is necessary. For example, given any $\epsilon$ such that $\frac t \epsilon \in \Nat$, the complete $t$-uniform hypergraph on $\frac t \epsilon$ vertices does not have any $\epsilon$-$t$-net with fewer than $\binom {t/\epsilon} t$ elements. Moreover, there exist geometrically-defined hypergraphs that do not admit $\epsilon$-$2$-nets of size $o(\frac{1}{\epsilon^2})$ (see Figure~\ref{fig:LB} and subsection \ref{app:rectangles}). On the other hand, in \cref{sec:geom} we show that certain classes of geometrically-defined hypergraphs have \enquote{small} $\epsilon$-$t$-nets even for \enquote{small} values of $n$. 
\end{remark}

\subsubsection*{Small $\epsilon$-nets, small $\epsilon$-$2$-nets.}
A natural question arising from \cref{thm:main_ht>2} is whether any hypergraph that admits small $\epsilon$-nets must also admit $\epsilon$-$t$-nets of approximately same size. In general, the answer is negative. Take for example a hypergraph whose smallest $\epsilon$-net is of size $\Omega(\frac{1}{\epsilon} \log \frac{1}{\epsilon}  )$ (see~\cite{KomPaW92},~\cite{PachTa13}), and augment it by adding a vertex that belongs to all hyperedges. Clearly, this second hypergraph has the same VC-dimension and a one-element $\epsilon$-net, but any $\epsilon$-$2$-net is of size $ \Omega(\frac{1}{\epsilon} \log \frac{1}{\epsilon}  )$.

However, this example is quite artificial. In \enquote{natural} scenarios (and for sufficiently large vertex sets) the smallest $\epsilon$-nets and $\epsilon$-$2$-nets might still have approximately same size. In \cref{sec:geom} we show that this is the case for some geometrically-defined hypergraphs.

Another scenario in which there exist both an $\epsilon$-net and an $\epsilon$-$2$-net of size $O(\frac{1}{\epsilon})$ is when the VC-dimension of the hypergraph is 1. In this case, the existence of an $\epsilon$-net of size $O(\frac{1}{\epsilon})$  was proved in \cite{KomPaW92}. The next theorem could be derived from results on Mnets \cite{DGJM19}, at the cost of poor multiplicative constants. Here we give a simpler proof for it.
\vcone*
\begin{proof}
Let $(V,\edges)$ be such a hypergraph and $n=\card V$. Without loss of generality, $\min\{\card e \colon e\in\edges\}\geq \epsilon n$. For $1\leq i \leq t$, there exists an $\epsilon$-net $N_i$ that hits each $e\in\edges$ at least $i$ times, and $\card{N_i}=i\lceil \frac 1 \epsilon\rceil$. To see this let $N_1$ be an $\epsilon$-net of size $\lceil\frac 1 \epsilon\rceil$ \cite{KomPaW92}. In the hypergraph induced on $V\setminus N_i$ the hyperedges hit only $i$ times by $N_i$ have cardinality $\geq \epsilon n -i$, while the number of vertices is $n - i\lceil \frac 1 \epsilon\rceil$, for a ratio $\frac{\epsilon n -i}{n - i\lceil \frac 1\epsilon\rceil}\geq \epsilon$. Take an $\epsilon$-net $N$ of size $\lceil \frac 1 \epsilon\rceil$ for this hypergraph and let $N_{i+1}=N_i \cup N$.
Finally, let the desired $\epsilon$-$t$-net consist of one $t$-subset from each element of $\Pi_H(N_t)$ with $\geq t$ vertices, of which there are at most $\card{N_t}+1$ by \cref{lm:sauer}.
\end{proof}

\section{Deterministic Construction of \texorpdfstring{$\epsilon$-$t$-Nets}{\textepsilon-t-Nets}}
\label{sec:eps-net}

Let $H=(V,\edges)$ be a finite hypergraph with VC-dimension $d$, and fix $\epsilon\in (0,1)$. In this section we provide an explicit polynomial-time construction of $\epsilon$-nets that immediately implies an explicit construction of $\epsilon$-$t$-nets. The size is far from optimal, but the construction is simpler than previous explicit constructions, as it does not rely on packing numbers nor on pseudo-random choices.

\subsection{Deterministic construction of \texorpdfstring{$\epsilon$-nets}{\textepsilon-nets}}
\label{subsec:eps_net}

We start with the following definition:

\begin{definition}
\label{def:stab}
Let $A,B$ be two subsets of $V$. We say that $A$ \emph{stabs} $B$ if for every hyperedge $S \in \edges$ with $B \subseteq S$ we have  $S \cap A \neq \emptyset$.
\end{definition}

Let $S \in \edges$ be a hyperedge, $\card S\geq {d+1}$, and let $X \in\binom S {d+1}$. Since the VC-dimension is $d$ the set $X$ is not shattered. Notice that $X= X \cap S\in \Pi_H(X)$. We can also assume that $\emptyset\in\Pi_H(X)$, for otherwise $X$ is a transversal for $H$ of size $d+1$.
Hence there exists at least one non-trivial, proper subset $A \subsetneq X$ such that $(X \setminus A) \notin \Pi_H(X)$.
Equivalently, there is a non-trivial partition of $X$ into $A$ and $X \setminus A$ such that $A$ stabs $X \setminus A$. We say that $X$ is of type $\card A\in \{1,\ldots,d\}$. Note that $X$ could have several types.
By the pigeonhole principle, there is a type $i$ and a subset $A\in \binom S i$ such that a fraction $d^{-1}\binom {\card S} i^{-1}$ of the elements of $\binom S {d+1}$ are stabbed by $A$, hence the following lemma holds:
\begin{lemma}\label{stabbing}
Let $S$ be a hyperedge containing $\geq d+1$ vertices of $V$. Then there exists an integer $i \in \{1,\ldots,d \}$ and a subset $A\in \binom S i$ that stabs $\binom {\card S} {d+1} d^{-1} \binom {\card S} i^{-1}$ subsets of cardinality $d+1-i$.
\end{lemma}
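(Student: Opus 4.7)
The plan is to prove \cref{stabbing} by a straightforward double-counting/pigeonhole argument, using the per-set observations already established in the paragraph just before the lemma statement.

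\textbf{Setup.} For every $(d+1)$-subset $X\in\binom{S}{d+1}$, the paragraph preceding the lemma produces (either after assuming $\emptyset\in\Pi_H(X)$, or else having already exhibited a transversal of size $d+1$) a non-trivial partition $X=A_X\sqcup B_X$ with $1\leq|A_X|\leq d$ such that $A_X$ stabs $B_X$. I would fix such a witness pair once and for all for each $X$, call $|A_X|$ the \emph{type} of $X$, and record the ordered pair $(A_X,B_X)$ with $|B_X|=d+1-|A_X|$.

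\textbf{First pigeonhole (on the type).} Since there are $\binom{|S|}{d+1}$ such subsets $X$ and only $d$ possible types $i\in\{1,\ldots,d\}$, some type $i$ is shared by at least $\binom{|S|}{d+1}/d$ of them. Restrict attention to those sets; denote this collection by $\mathcal{X}_i$, so $|\mathcal{X}_i|\geq \binom{|S|}{d+1}/d$, and for each $X\in\mathcal{X}_i$ we have a distinguished pair $(A_X,B_X)$ with $A_X\in\binom{S}{i}$ and $B_X\in\binom{S}{d+1-i}$.

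\textbf{Second pigeonhole (on the stabber).} The map $X\mapsto A_X$ sends $\mathcal{X}_i$ into the $\binom{|S|}{i}$-element set $\binom{S}{i}$. By pigeonhole some $A\in\binom{S}{i}$ is the image of at least
\[
\frac{|\mathcal{X}_i|}{\binom{|S|}{i}} \;\geq\; \frac{\binom{|S|}{d+1}}{d\binom{|S|}{i}}
\]
elements of $\mathcal{X}_i$. For any two distinct $X,X'$ in the preimage we have $A_X=A_{X'}=A$, hence $B_X=X\setminus A\neq X'\setminus A=B_{X'}$, so these $B_X$'s are pairwise distinct $(d+1-i)$-subsets of $S$, each of which is stabbed by $A$ by construction. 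This yields the desired count and the lemma.

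\textbf{Main obstacle.} Nothing genuinely hard arises here; the only delicate point is the cosmetic one of handling the disposable case $\emptyset\notin\Pi_H(X)$, which I would dispatch by noting that in that case $X$ itself is a transversal of size $d+1$, which supersedes the conclusion of the lemma (or, equivalently, one may assume throughout that $\emptyset\in\Pi_H(X)$ for every $X$ considered, since otherwise the stronger statement is already in hand). Everything else is just keeping the two successive pigeonhole applications bookkept correctly.
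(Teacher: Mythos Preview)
Your proof is correct and takes essentially the same approach as the paper: a two-step pigeonhole, first on the type $i$ and then on the stabbing set $A\in\binom{S}{i}$, together with the observation that distinct $X$'s with the same $A_X=A$ yield distinct $B_X$'s. The paper's own argument is the terse one-line version of exactly this (and is spelled out more fully in the proof of the analogous \cref{2stabbingt}).
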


\subsubsection*{Constructing $\epsilon$-nets.}
Put $n := \card{V}$. We construct an $\epsilon$-net of size $O_d(\frac{1}{\epsilon^d})$ as follows. Start with $N = \emptyset$.
As long as there is a hyperedge $S \in \edges$ with $\card{S} \geq \epsilon n$ and $S \cap N = \emptyset$, \cref{stabbing} asserts that some $i$-subset from $S$ stabs $\Omega_d((\epsilon n)^{d+1-i})$ subsets of $S$ with cardinality $d+1-i$ for an appropriate $i \in \{1, \ldots,d\}$. Add all elements of this subset to $N$; we call this a type $i$ iteration.

The resulting set is an $\epsilon$-net by construction. It is left to show that $\card{N} = O_d(\frac{1}{\epsilon^d})$. As each step of the construction adds at most $d$ vertices to $N$ it is enough to bound the number of iterations $T$. By the pigeonhole principle, at least $\frac{T}{d}$ of the iterations have the same type, say $i$.
After a type-$i$ iteration $N$ stabs an additional $\Omega_d((\epsilon n)^{d+1-i})$ subsets of cardinality $d+1-i$ none of which were previously stabbed. Since there are ${\binom n  {d+1-i}}$ subsets of cardinality $d+1-i$ we have $\frac{T}{d} = O_d({\binom n {d+1-i}} (\epsilon n)^{-(d+1-i)}) = O_d(\frac{1}{\epsilon^d})$.

\subsubsection*{Complexity analysis}

We analyze the running time of the above algorithm. We assume that for the algorithm we have a data structure which is the incidence matrix of the hypergraph $H$. Without loss of generality, each hyperedge of $\edges$ may be replaced with a subset of cardinality $\lceil \epsilon n\rceil$. This can be done in time $O(\epsilon n^{d+1})$ due to the fact that $\card \edges=O(n^d)$. 

We consider each $X\in\binom{V}{d+1}$. Firstly we check if there is a hyperedge $S\in \edges$ which contains $X$, if not, we continue to the next subset. If yes, we consider each of the $2^{d+1}-2$ proper subsets of $X$. Let $A\subset X$ be such a subset. We check if $X\setminus A$ is stabbed by $A$. We can do it by going over all $O(n^d)$ hyperedges of $H$. Hence, in total this pre-processing step takes $O(n^{d+1}\cdot 2^{d+1}\cdot n^d)=O_d(n^{2d+1})$ running time. While determining the type of any $(d+1)$-subset of $X$ and scanning all the hyperedges of the hypergraph, we maintain for any $i$-subset $A \subset X$ $(1 \leq i \leq d)$, a list of all the $(d+1-i)$-subsets of $X$ that $A$ stabs and their number. 

Consider some iteration of the algorithm and let $S\in {\edges}$ be such that $\card S\geq\epsilon n$ and $S\cap N=\emptyset$ where $N$ is the collection of elements found until this iteration. We find a subset $A\subset S$ of size at most $d$ which stabs the most subsets of size $(d+1)-\card A$.

The running time of each iteration is $O(\card{S}^d\cdot n^d)=O(\epsilon^d n^{2d})$. Hence in total the running time of the algorithm after the pre-processing step is $O_d(\frac{\epsilon^d n^{2d}}{\epsilon^d})=O_d(n^{2d})$. Hence the total running of the algorithm described in the previous section is $O_d(n^{2d})$.

\subsubsection*{Immediate applications to \texorpdfstring{$\epsilon$-$t$-nets}{\textepsilon-t-nets}}
\label{sec:sub:immediate}

The construction of $\epsilon$-nets in \cref{subsec:eps_net} gives two straightforward constructions of $\epsilon$-$t$-nets.
\begin{enumerate}
    \item \emph{Trivial construction.} Use the above algorithm to explicitly construct $t$ disjoint  $\epsilon$-nets of size $O_d(1/\epsilon^{d})$, and take all $t$-subsets of elements in their union that contain one element from each net. The resulting $\epsilon$-$t$-net is of size $O_d(1/\epsilon^{td})$.
    \item \emph{Construction via $H^t_{lc}$.} Use the above algorithm to explicitly construct an $\frac{\epsilon}{2}$-net for the hypergraph $H^t_{lc}$, which is an $\epsilon$-$t$-net for $H$ (as was shown in the proof of Theorem~\ref{thm:main_ht>2}). The resulting $\epsilon$-$t$-net is of size $O_{d,t}(1/\epsilon^{\VC H^t_{lc}})$.
(The cycle with a low crossing number required for constructing the hypergraph $H^t_{lc}$ can be found in polynomial time \cite{Welzl88,Ma99}). 
\end{enumerate}

\subsection{Deterministic construction of \texorpdfstring{$\epsilon$-$t$-nets}{\textepsilon-t-nets}}
\label{subsec:construction}

We present a direct construction of $\epsilon$-$t$-nets without passing through $\epsilon$-nets. For the sake of convenience, we start by for presenting the method for $t=2$.

The following definition extends the classical notion of VC-dimension.

\begin{definition}
Let $t$ be a positive integer. Also let $H=(V,\edges)$ be a hypergraph, and $T',T$ such that $T' \subseteq T \subseteq V$. We say that $T'$ is \emph{$t$-realized} by $H$ (with respect to $T$) if  $T'\cup S \in\Pi_H(T)$ for some $S\subseteq T$ such that $\card S< t$. We say that $T$ is \emph{$t$-shattered} by $H$ if every $T' \subseteq T$ is $t$-realized by $H$ (with respect to $T$). The $t$-VC-dimension of $H$, denoted by $\VC_t H$, is the maximal size of a vertex set that is $t$-shattered by $H$.
\end{definition}

Note that the $1$-VC-dimension is the standard VC-dimension. Moreover, the $t$-VC-dimension is at most the $(t+1)$-VC-dimension for any positive integer $t$. 
We use the following adaptation of \cref{def:stab}:

\begin{definition}
Let $H=(V,\edges)$ be a hypergraph. Given two vertex sets $A,B \subseteq V$, we say that $A$ \emph{2-stabs} $B$ if each hyperedge of $\edges$ that contains $B$ also contains at least two vertices from $A$.
\end{definition}

\begin{theorem}\label{thm:direct}
For a hypergraph $H=(V,{\edges})$ with 2-VC-dimension $d$, one can construct explicitly an $\epsilon$-$2$-net of size $O_d(1/\epsilon^{d-1})$.
\end{theorem}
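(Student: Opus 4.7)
I plan to mimic the greedy construction of \cref{subsec:eps_net}, replacing single-point stabbing with 2-stabbing and the VC-dimension with the 2-VC-dimension. The key structural lemma is an analogue of \cref{stabbing}: if $X\subseteq V$ has $|X|=d+1$ and lies inside some hyperedge of $H$, then there exists $A\subseteq X$ with $2\leq |A|\leq d+1$ such that $A$ 2-stabs $X\setminus A$. Its proof is immediate from the definitions: since $|X|>\VC_2 H$, the set $X$ is not 2-shattered, so some $T'\subseteq X$ is not 2-realized with respect to $X$, meaning no hyperedge traces on $X$ to $T'$ or to $T'\cup\{s\}$ for any $s\in X\setminus T'$. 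Setting $A=X\setminus T'$, this says exactly that every hyperedge containing $X\setminus A$ meets $A$ in at least two vertices. The cases $|A|\in\{0,1\}$ both force $X\notin\Pi_H(X)$, contradicting the fact that $X$ sits in a hyperedge; hence $|A|\geq 2$.

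\textbf{The greedy algorithm.} We maintain a family $N\subseteq\binom{V}{2}$, starting from $N=\emptyset$. As long as there is a hyperedge $S\in\edges$ with $|S|\geq \epsilon n$ containing no 2-subset of $N$, we pick such an $S$ and, for every $X\in\binom{S}{d+1}$, apply the lemma to produce a ``type'' $i(X)=|A(X)|\in\{2,\ldots,d+1\}$ and a 2-stabbing subset $A(X)\subseteq X$. Double pigeonhole---first over the $d$ possible types and then over the $\binom{|S|}{i}$ candidate $i$-subsets of $S$---yields a single pair $(i,A)$ that equals $(i(X),A(X))$ for at least a $1/\bigl(d\binom{|S|}{i}\bigr)$-fraction of $\binom{S}{d+1}$. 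Consequently $A$ 2-stabs $\Omega_d\!\bigl((\epsilon n)^{d+1-i}\bigr)$ distinct $(d+1-i)$-subsets of $S\setminus A$. We add $\binom{A}{2}$ to $N$ and iterate; at termination $N$ is an $\epsilon$-2-net by construction. (The extreme case $i=d+1$ means every hyperedge of $H$ meets $A$ in $\geq 2$ vertices, so $\binom{A}{2}$ is already an $\epsilon$-2-net and the algorithm halts.)

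\textbf{Size bound and main obstacle.} The subtle point---where I expect the bulk of the verification work to lie---is showing that each iteration makes irreversible progress. Explicitly, a $(d+1-i)$-subset $B\subseteq S$ 2-stabbed in the current iteration cannot have been 2-stabbed by any previously added $A'$: otherwise every hyperedge containing $B$, and in particular $S$, would contain a 2-subset of $A'\subseteq\bigcup N$, contradicting our choice of $S$ to contain no 2-subset from $N$. Hence a type-$i$ iteration irreversibly handles $\Omega_d((\epsilon n)^{d+1-i})$ fresh $(d+1-i)$-subsets out of the $O(n^{d+1-i})$ available ones, so the number of type-$i$ iterations is $O_d(\epsilon^{-(d+1-i)})$. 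The worst case $i=2$ gives $O_d(\epsilon^{-(d-1)})$ iterations in total; each contributes at most $\binom{d+1}{2}$ 2-subsets to $N$, yielding $|N|=O_d(\epsilon^{-(d-1)})$. The polynomial-time implementation (maintaining, for each candidate $i$-subset, the list of $(d+1-i)$-subsets it 2-stabs) follows the same template as the complexity analysis in \cref{subsec:eps_net}.
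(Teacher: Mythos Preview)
Your proposal is correct and follows essentially the same argument as the paper: the same structural lemma (a non-$2$-shattered $(d{+}1)$-set yields a $2$-stabbing part $A$ with $\lvert A\rvert\ge 2$), the same double pigeonhole to extract a popular $(i,A)$, the same greedy loop adding $\binom{A}{2}$, and the same freshness/counting bound giving $O_d(1/\epsilon^{d-1})$ iterations. The only cosmetic differences are that the paper rules out $i=d{+}1$ up front (equivalently, your terminal case), and it bounds the total number of iterations via a single pigeonhole over types rather than summing type-by-type.
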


\begin{proof}
Let $S \in \edges$ be a hyperedge and let $X \in \binom S {d+1}$.
Since the 2-VC-dimension is $d$ the set $X$ is not 2-shattered. Notice that $X= X \cap S$ and so $X$ and all elements of $\binom{X}{d}$ are 2-realized by $H$ with respect to $X$. For our purpose, we can also assume that $\emptyset$ is 2-realized by $H$ (with respect to $X$), for otherwise $\binom X 2$ is a transversal for $H$ of size $\binom {d+1} 2$.
This means that there is a partition, say $X = A \cup (X \setminus A)$, such that $A$ 2-stabs $X \setminus A$.
Let $i= \card{A}$. Note that $i \in \{2,\ldots,d\}$. We say that $X = A \cup (X\setminus A)$ is a type $i$ partition.
We need the following lemma, whose proof is similar to that of \cref{stabbing}.
\begin{lemma}\label{2stabbing}
Let $S$ be a hyperedge containing $\geq d+1$ vertices of $V$. Then there exists an integer $i \in \{2,\ldots,d \}$ and a subset $A \subset S$ with cardinality $i$ that 2-stabs $\frac{{\binom {\card S} {d+1}}}{(d-1) {\binom {\card S} i}}$ subsets $B$ of cardinality $d+1-i$.
\end{lemma}

\subsubsection*{Constructing $\epsilon$-$2$-nets} Let $H=(V,\edges)$ be as above and let $\epsilon > 0$ be fixed. Put $n = \card{V}$. We construct an $\epsilon$-$2$-net of size $O_d(\frac{1}{\epsilon^{d-1}})$ as follows. We start with a set $N = \emptyset$.
As long as there is a hyperedge $S \in \edges$ with $\card{S} \geq \epsilon n$ that does not contain any pair $\{v,w\} \in N$, for an appropriate $i \in \{2, \ldots,d\}$ we take an $i$-subset $A \subset S$ 2-stabbing $\Omega_d((\epsilon n)^{d+1-i})$ subsets of $S$ with cardinality $d+1-i$, and add to $N$ all $\binom i 2$ elements of $A$. We call this a type $i$ iteration. This is possible by \cref{2stabbing}.

The resulting set is an $\epsilon$-$2$-net by construction.
It is left to show that $\card{N} = O_d(\frac{1}{\epsilon^{d-1}})$.
In each step of the construction we add at most ${\binom d 2}$ pairs to $N$ so it is enough to bound the number of iterations $T$. By the pigeonhole principle, at least $\frac{T}{d-1}$ of the iterations have the same type, say $i$.
There are ${\binom n {d+1-i}}$ subsets of cardinality $d+1-i$, and in each of the at least $\frac{T}{d-1}$ type $i$ iterations we $2$-stab at least $\Omega_d((\epsilon n)^{d+1-i})$ additional subsets of cardinality $d+1-i$, so we have $\frac{T}{d-1} = O_d(\frac{{\binom n {d+1-i}}}{(\epsilon n)^{d+1-i}}) = O_d(\frac{1}{\epsilon^{d+1-i}})$ so $t = O_d(\frac{1}{\epsilon^{d-1}})$ (since $i \geq 2$). This completes the proof of \cref{thm:direct}.
\end{proof}

\subsubsection*{Complexity analysis.}
The only significant difference between the constructions of \cref{subsec:eps_net} and of \cref{subsec:construction} is the factor that depends on the size of the resulting net. Hence, the complexity of the algorithm in this section is bounded by $O_d(n^{2d})$, where $d$ is the $2$-VC-dimension of $H$.

\subsubsection{Extension of the Direct Construction of \texorpdfstring{$\epsilon$-$2$-Nets}{\textepsilon-2-Nets} to \texorpdfstring{$\epsilon$-$t$-Nets}{\textepsilon-t-Nets}}
\label{app:Extension-to-t}

Now we show how our deterministic construction of $\epsilon$-2-nets can be extended to $\epsilon$-$t$-nets for other values of $t$. The following argument is a direct adaptation of the argument from \cref{subsec:construction}.

\begin{definition}
Let $H=(V,\edges)$ be a hypergraph. Given two disjoint vertex sets $A,B \subset V$, we say that $A$ \emph{$t$-stabs} $B$ if each hyperedge $e \in \edges$ that contains $B$ must contain at least $t$ vertices from $A$.
\end{definition}

\begin{proposition}\label{prop:t-direct}
Let $H=(V,\edges)$ be a hypergraph with $t$-VC-dimension $d$. Then one can construct explicitly an $\epsilon$-$t$-net for $H$ of size $O_{d,t}(\frac{1}{\epsilon^{d+1-t}})$.
\end{proposition}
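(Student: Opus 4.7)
The plan is to mirror the proof of \cref{thm:direct} step by step, with $2$-realizability and $2$-stabbing replaced by their $t$-analogues while tracking how the bounds depend on $t$.

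First, I would establish a $t$-stabbing analogue of \cref{2stabbing}. Fix $S\in\edges$ with $\card S\geq d+1$ and any $X\in\binom{S}{d+1}$. Since $\VC_t H=d$, the set $X$ is not $t$-shattered, so some $T'\subseteq X$ fails to be $t$-realized. The case $T'=X$ is excluded, as $X\in\Pi_H(X)$ follows from $X\subseteq S\in\edges$. If $T'=\emptyset$ fails to be $t$-realized, then no subset of $X$ of size less than $t$ lies in $\Pi_H(X)$, i.e.\ every hyperedge of $H$ meets $X$ in at least $t$ vertices, so $\binom{X}{t}$ is itself a $t$-transversal of size $O_{d,t}(1)$ and we are done. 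Otherwise, we may take $\emptyset\subsetneq T'\subsetneq X$ not $t$-realized, set $A:=X\setminus T'$ and $B:=T'$, and unpack the definition: for every hyperedge $e$ with $B\subseteq e$ one has $\card{e\cap A}\geq t$, i.e.\ $A$ $t$-stabs $B$. Because $S\supseteq X\supseteq A$ and $S\supseteq B$, the type $i:=\card A$ must satisfy $i\geq t$, hence $i\in\{t,\ldots,d\}$. A double pigeonhole (over the $\binom{\card S}{d+1}$ many choices of $X$, then over the $\binom{\card S}{i}$ many choices of $A$) then produces some $i\in\{t,\ldots,d\}$ and $A\in\binom{S}{i}$ that $t$-stabs at least $\frac{\binom{\card S}{d+1}}{(d-t+1)\binom{\card S}{i}}=\Omega_{d,t}(\card S^{d+1-i})$ subsets of $S$ of size $d+1-i$.

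Next, I would run the corresponding greedy procedure. Starting with $N=\emptyset$, while there exists $e\in\edges$ with $\card e\geq\epsilon n$ and no element of $N$ contained in $e$, apply the lemma to $e$ to obtain a type $i$ and a subset $A\subseteq e$ of size $i$, and add all of $\binom{A}{t}$ to $N$; call this a type-$i$ iteration. The output is an $\epsilon$-$t$-net by construction, because any remaining $e$ containing the stabbed $B$ has $\geq t$ vertices in $A$ and hence contains a $t$-subset from $\binom{A}{t}\subseteq N$.

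For the size bound, I would use a potential argument, tracking, for each type $i\in\{t,\ldots,d\}$, the number of $(d+1-i)$-subsets of $V$ ever $t$-stabbed. The crucial observation is that the $B$'s produced in the current iteration are always new: if some $B$ had already been $t$-stabbed by an earlier $A_0$, then the current edge $e\supseteq B$ would already contain $\geq t$ vertices of $A_0$, and thus a $t$-subset from $\binom{A_0}{t}\subseteq N$, contradicting our choice of $e$. Hence each type-$i$ iteration contributes $\Omega_{d,t}((\epsilon n)^{d+1-i})$ fresh $(d+1-i)$-subsets out of a global supply of $\binom{n}{d+1-i}$, capping the number of type-$i$ iterations at $O_{d,t}(\epsilon^{-(d+1-i)})$. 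Summing over $i\in\{t,\ldots,d\}$, the dominant contribution comes from $i=t$, giving $O_{d,t}(\epsilon^{-(d+1-t)})$ iterations overall; since each iteration adds at most $\binom{d}{t}=O_{d,t}(1)$ new $t$-subsets, we get $\card N=O_{d,t}(\epsilon^{-(d+1-t)})$ as claimed.

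The only real obstacle is the first step: carefully verifying from the definition of $t$-realizability that a proper nonempty $T'\subseteq X$ failing to be $t$-realized really does force the $t$-stabbing property on $X\setminus T'$, and that the type range $i<t$ is genuinely excluded (the subtlety being the asymmetry between \emph{adding} fewer than $t$ vertices in the definition of $t$-realizability and \emph{removing} them on the $A$ side). Once that translation is in place, the remainder is a routine transcription of the $t=2$ argument with $(d-1)$ replaced throughout by $(d-t+1)$.
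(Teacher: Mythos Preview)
Your proposal is correct and follows essentially the same approach as the paper's proof: establish a $t$-stabbing lemma via double pigeonhole on the $(d+1)$-subsets of a heavy hyperedge, then run a greedy procedure with a type-counting potential argument. The ``obstacle'' you flag dissolves exactly as you outlined---if $T'$ is not $t$-realized and some $e\supseteq T'$ had $\card{e\cap(X\setminus T')}<t$, then $e\cap X$ would $t$-realize $T'$; and applying the $t$-stabbing condition to the hyperedge $S$ itself (which contains $B$ and all of $A$) forces $i=\card A\geq t$, just as your containment remark implies.
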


\begin{proof}
Let $S \in \edges$ be a hyperedge and let $X \subseteq S$ be a subset of $S$ with cardinality $d+1$.
Since the $t$-VC-dimension is $d$ then the set $X$ cannot be $t$-shattered. Notice that $X= X \cap S$ and so each $X' \subset X$ with $\card{X'}\geq d+2-t$ is $t$-realized by $H$. For our purpose, we can also assume that $\emptyset$ is $t$-realized by $H$ (with respect to $X$), for otherwise the set of all $t$-subsets in $X$ is a transversal for $H$ of size $\binom{d+1} t$.
This means that there exists a subset $X \setminus A \subset X$ of size between $1$ and $d+1-t$ that is not $t$-realized by $H$ (with respect to $X$).
Equivalently, there is a partition, say $X = A \cup (X \setminus A)$ such that $A$ $t$-stabs $X \setminus A$.
Let $i= \card{A}$. Note that $i \in \{t,\ldots,d\}$. We say that $X = A \cup (X\setminus A)$ is a type $i$ partition.
Note that there could be more than one type partition for the same set $X$. 
We need the following lemma:
\begin{lemma}\label{2stabbingt}
Let $S$ be a hyperedge containing $k \geq d+1$ vertices of $V$. Then there exists an integer $i \in \{t,\ldots,d \}$ and a subset $A \subset S$ with cardinality $i$ that $t$-stabs $\frac{{\binom k {d+1}}}{(d-t + 1) {\binom k i}} = \Omega_{d,t}(k^{d+1-i})$ subsets $B$ of cardinality $d+1-i$.
\end{lemma}

\begin{proof}
For each subset in $\binom S {d+1}$ there exists a partition with one of the above stabbing types. By the pigeonhole principle at least $\frac{{\binom k {d+1}}}{d-t+1} = \Omega(k^{d+1})$ of these subsets have the same type, say $i$.
Each such subset $X$ of type $i$ is charged by a $t$-piercing subset $A \subset X$ of cardinality $i$. Then by the pigeonhole principle there is a subset $A \subset S$ of cardinality $i$ that is charged at least $\Omega(\frac{k^{d+1}}{{\binom k i}}) = \Omega(k^{d+1-i})$ times. This means that $A$ 2-stabs $\Omega(k^{d+1-i})$ subsets $B \subset S$ of cardinality $d+1-i$, as asserted.
\end{proof}

\subsubsection*{Constructing $\epsilon$-$t$-nets} Let $H=(V,\edges)$ be as above and let $\epsilon > 0$ be fixed. Put $n = \card{V}$. We construct an $\epsilon$-$t$-net of size $O(\frac{1}{\epsilon^{d+1-t}})$ as follows. 
We start with a set $N = \emptyset$.
As long as there is a hyperedge $S \in \edges$ with $\card{S} \geq \epsilon n$ that does not include any $t$-subset of $N$, for an appropriate $i \in \{t, \ldots,d\}$ we take an $i$-subset $A \subset S$ $t$-stabbing $\Omega((\epsilon n)^{d+1-i})$ subsets of $S$ with cardinality $d+1-i$, and add to $N$ all $\binom i t$ elements of $A$. We call this a type $i$ iteration. This is possible by \cref{2stabbingt}.

Obviously the resulting set is an $\epsilon$-$t$-net by construction.
It is left to show that $\card{N} = O(\frac{1}{\epsilon^{d+1-t}})$.
In each step of the construction we add at most $\binom d t$ subsets to $N$ so it is enough to bound the number of iterations. Denote this number by $L$. By the pigeonhole principle, at least $\frac{L}{d-t+1}$ of the iterations have the same type, say $i$.
There are $\binom n {d+1-i}$ subsets of cardinality $d+1-i$ and in each of the at least $\frac{L}{d-t+1}$ type $i$ iterations we $t$-stab at least $\Omega((\epsilon n)^{d+1-i})$ additional subsets of cardinality $d+1-i$. We have that $\frac{L}{d-t+1} = O(\frac{{\binom n {d+1-i}}}{(\epsilon n)^{d+1-i}}) = O(\frac{1}{\epsilon^{d+1-i}})$ so $L = O(\frac{1}{\epsilon^{d+1-t}})$ (since $i \geq t$). This completes the proof.
\end{proof}

\subsection{\texorpdfstring{$t$}{t}-VC-dimension versus classical VC-dimension}
\label{subsec:VCvs2VC}

What can be said about the relation between VC-dimension and our newly introduced $t$-VC-dimension, for $t\geq 2$? By definition, 
$\VC H \leq \VC_2 H$.
As shown below ideas from Dudley's unpublished lecture notes \cite[Th.\@ 4.37]{Dudley99} yield $\VC_2 H\leq 2 \VC H +1$.
This is sharp for some small hypergraphs, such as that with vertex set $\{a,b,c\}$ and hyperedges $\{a\}$, $\{b,c\}$, $\{a,c\}$, and $\{a,b,c\}$, which has VC-dimension 1 but 2-VC-dimension 3.

\begin{claim}
Let $H(V,\edges)$ be a hypergraph then $\VC_2 H\leq 2 \VC H +1$.
\end{claim}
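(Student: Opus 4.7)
The plan is a two-step reduction to ordinary VC-theory.

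First, re-encode 2-shattering as ordinary shattering. Introduce the auxiliary class $H^\triangle = (V,\edges^\triangle)$ with $\edges^\triangle = \{e\mathbin\triangle S : e\in\edges,\ S\subseteq V,\ |S|\leq 1\}$. If $T\subseteq V$ is 2-shattered by $H$, then for each $B\subseteq T$ the 2-shattering hypothesis furnishes $e\in\edges$ with $e\cap T\in\{B, B\cup\{w\}\}$ for some $w\in T$; correspondingly $(e\mathbin\triangle S)\cap T=B$ by taking $S=\emptyset$ or $S=\{w\}$. Thus $T$ is classically shattered by $H^\triangle$, giving $\VC_2 H\leq \VC H^\triangle$.

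Second, I would bound $\VC H^\triangle\leq 2\VC H+1$ directly. Let $T=\{v_1,\dots,v_k\}$ be shattered by $H^\triangle$. For each $B\subseteq T$ pick witnesses $(e_B,S_B)$; setting $u_B=S_B\cap T\in T\cup\{\star\}$ (reading $\{\star\}=\emptyset$), the shattering condition $(e_B\mathbin\triangle S_B)\cap T = B$ becomes $e_B\cap T=B\mathbin\triangle\{u_B\}$. Hence $B\mapsto(e_B\cap T,u_B)$ is an injection $2^T\hookrightarrow \Pi_H(T)\times(T\cup\{\star\})$, yielding $2^k\leq(k+1)\,|\Pi_H(T)|$. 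Combined only with Sauer--Shelah this gives the loose bound $k=O(d\log d)$; the plan to sharpen it to $k\leq 2d+1$ is to stratify the injection by trace size. Any $A\in\Pi_H(T)$ of cardinality $j$ can serve as $e_B\cap T$ only when $B$ lies at Hamming distance at most one from $A$, so $A$ contributes one subset at level $j$ and $|A|$ at each of levels $j-1$ and $j+1$. Combining the resulting level-wise inequalities with the extremal structure of a maximum VC-dimension-$d$ class on $T$ (whose traces concentrate on the bottom $d+1$ levels) forces $k\leq 2d+1$, exactly as in the Dudley argument cited in the claim.

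The main obstacle is this level-wise sharpening, needed because naive counting ignores that a VC-dimension-$d$ class on a $k$-point set cannot simultaneously support many traces at both high and low levels. The three-element example preceding the claim (with $\VC H=1$ and $\VC_2 H=3$) shows that the constant $2$ in $2d+1$ is tight.
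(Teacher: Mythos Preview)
Your first step---reducing to the auxiliary class $H^\triangle$ and observing $\VC_2 H\le\VC H^\triangle$---is correct and clean. The gap is entirely in the second step: you never actually prove $\VC H^\triangle\le 2\VC H+1$. The counting $2^k\le(k+1)\,|\Pi_H(T)|$ is fine but, as you note, too weak. Your ``level-wise sharpening'' is only a plan, not an argument: the claim that a trace $A$ with $|A|=j$ contributes ``$|A|$ at each of levels $j-1$ and $j+1$'' is wrong (it is $j$ at level $j-1$ and $k-j$ at level $j+1$), and the assertion that the traces of a VC-dimension-$d$ class ``concentrate on the bottom $d+1$ levels'' is false in general (take complements). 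You then defer to ``the Dudley argument,'' but that argument is precisely what is at issue; you cannot invoke it as a black box.

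The paper's proof avoids all counting. It takes a $2$-shattered set $T$, splits it into halves $T_1,T_2$, and shows directly that if $T_1$ is not shattered by $H$ then $T_2$ must be: pick $S\subseteq T_1$ not realized on $T_1$; for any $U\subseteq T_2$ the $2$-shattering of $T$ gives $e$ with $e\cap T=S\cup U\cup W$, $|W|\le1$; since $S\notin\Pi_H(T_1)$ the extra point $W$ must land in $T_1$, so $e\cap T_2=U$. Hence $\lfloor|T|/2\rfloor\le d$ and $|T|\le 2d+1$. This same splitting argument, incidentally, applies verbatim to a set shattered by your $H^\triangle$ and would complete your reduction in two lines---so your detour through $H^\triangle$ is not wrong, just unnecessary, and your attempted counting finish should be replaced by this combinatorial one.
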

\begin{proof}
Assume that $V$ be $2$-shattered. We can show that for every $V'\in 2^V$ either $V'$ or $V\setminus V'$ is shattered. This yields the desired result by taking $V'$ of cardinality $\left\lceil \frac {\card V} 2\right\rceil$.

If $V'$ is not shattered, then there exists $S\in 2^{V'}\setminus\Pi_H(V')$. For any $T\in 2^{V\setminus V'}$, there is a set $W$, $\card W\leq 1$, such that $S\cup T\cup W\in\Pi_H(V)$, because $V$ is $2$-shattered. Since $S\notin \Pi_H(V')$ we must have $W\subseteq V'$. But then this implies $T\in \Pi_H(V\setminus V')$, that is, $V\setminus V'$ is shattered.
\end{proof}

 For general $t$, we conjecture that $\VC_t H \leq 2\VC H + 2t-1$. The reasoning below gives roughly $\VC_t H \leq 9.09 \max\{\VC H, t-1\}$.
 
Let $H$ be a hypergraph of finite VC-dimension with a largest $t$-shattered subset of vertices $T$. As $T$ is $t$-shattered, we have $2^T = \{e\setminus S\colon e\in\Pi_H(T),\ S\subseteq T,\ \card S < t\}$. This yields

\begin{equation*}
2^{\VC_t H}\leq \card{\Pi_H(T)} \cdot \sum_{i=0}^{t-1} \binom {\VC_t H} i 
\leq  \sum_{i=0}^{\VC H} \binom {\VC_t H} i  \cdot \sum_{i=0}^{t-1} \binom{\VC_t H}{i},
\end{equation*}
with the last inequality following from \cref{lm:sauer}. When $\VC_t H \geq 2\max\{t-1,\VC H\}$, applying \cref{cl:vc_pairs} gives
\begin{equation*} 1 \leq  h\left(\frac {\VC H} {\VC_t H}\right)+ h\left(\frac{t-1}{\VC_t H}\right).\end{equation*} 

From this inequality we obtain:
\begin{proposition}\label{prop:2vs1}
For $t\in\Nat\setminus\{0\}$, the $t$-VC-dimension of a hypergraph of VC-dimension $d$ is at least $d$, at most $2 \gamma_2 \max\{d,t-1\}$ (where $\gamma_2\simeq 4.54)$, and, as $d\to\infty$, at most $2d+o(d)$.
\end{proposition}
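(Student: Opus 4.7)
The plan is to prove the three parts of the proposition separately, all leveraging the displayed inequality
$1 \leq h\bigl(\VC H/\VC_t H\bigr) + h\bigl((t-1)/\VC_t H\bigr)$
that was established in the paragraph immediately preceding the statement (valid whenever $\VC_t H \geq 2\max\{t-1,\VC H\}$).

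For the lower bound $\VC_t H \geq d$, I would argue directly from the definitions: if $T$ is shattered by $H$ in the classical sense then $\Pi_H(T) = 2^T$, so every $T' \subseteq T$ lies in $\Pi_H(T)$ and is therefore $t$-realized by taking $S=\emptyset$ in the definition. Hence $T$ is $t$-shattered, giving $\VC_t H \geq \VC H = d$ (and more generally, $t$-VC-dimension is non-decreasing in $t$ by the same witness $S=\emptyset$).

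For the second bound, set $D = \VC_t H$ and $M = \max\{d,t-1\}$. If $D < 2M$ the bound $D \leq 2\gamma_2 M$ is trivial since $\gamma_2 > 1$. Otherwise $D \geq 2M$, so $d/D,\ (t-1)/D \in (0, \tfrac 1 2]$, and monotonicity of $h$ on this interval together with the key inequality yields $1 \leq 2 h(M/D)$, i.e.\ $h(M/D) \geq \tfrac 1 2$. Since $M/D \leq \tfrac 1 2$, applying the inverse of $h$ (which is increasing on $(0,\tfrac 1 2]$) and using the definition $\gamma_2 = (2 h^{-1}(\tfrac 1 2))^{-1}$ from Section~\ref{sec:tuple} gives $M/D \geq 1/(2\gamma_2)$, that is, $D \leq 2\gamma_2 M$.

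For the asymptotic claim, fix $t$ and let $d\to\infty$. Set $D=\VC_t H$; if $D\leq 2d$ we are done, so assume $D > 2d$. For $d$ large enough that $d\geq t-1$ we then also have $D \geq 2(t-1)$, hence the key inequality applies. Now $(t-1)/D \leq (t-1)/(2d) = o(1)$ as $d\to\infty$, and continuity of $h$ at $0$ forces $h((t-1)/D) = o(1)$, so $h(d/D) \geq 1 - o(1)$. Since $d/D \in (0,\tfrac 1 2)$ and $h$ is continuous on $[0,\tfrac 1 2]$ with unique maximum $h(\tfrac 1 2)=1$, this forces $d/D \to \tfrac 1 2$, i.e.\ $D = 2d + o(d)$. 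The main obstacle I foresee is essentially cosmetic: all the analytic heavy lifting is done by the inequality above the statement, and the only care required is in the case splits ensuring the regime $D \geq 2\max\{d,t-1\}$ before invoking it, plus a brief continuity argument to convert $h(d/D)\to 1$ into a rate on $d/D$.
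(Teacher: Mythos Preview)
Your proposal is correct and follows exactly the route the paper intends: the paper offers no further argument beyond stating that the proposition follows from the displayed inequality $1 \leq h(\VC H/\VC_t H)+h((t-1)/\VC_t H)$, and you have supplied the missing details (the case split on $D \gtrless 2M$, the use of monotonicity of $h$ to reduce to $2h(M/D)\geq 1$, and the continuity argument for the asymptotic claim). One cosmetic remark: in the asymptotic part you write ``forces $d/D\to\tfrac12$'', but what you actually need and have is the one-sided bound $d/D\geq h^{-1}\bigl(1-h((t-1)/(2d))\bigr)=\tfrac12-o(1)$, which directly yields $D\leq 2d+o(d)$; phrasing it as a limit is slightly misleading since for some $d$ one may be in the trivial case $D\leq 2d$.
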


An interesting geometric example is the hypergraph $H$ whose vertex set is a finite subset of $\Re^{d-1}$ and whose hyperedges are induced by half-spaces. It is well-known that $\VC H=d$. 

More generally, we have $\VC_t H \leq td$ for all $t$. Indeed, by Tverberg's theorem (see, e.g., \cite{MATOUSEK}), every set $T$ of $td+1$ points in $\Re^{d-1}$ admits a partition into $t+1$ pairwise disjoint and non-empty sets $T = X \cup Y_1 \cup \dots \cup Y_t$ such that the intersection of their convex hulls is non-empty. No half-space can $t$-realize $X$ since any half-space that contains $X$ must contain at least one point from each $Y_i$, that is, at least $t$ points of $T\setminus X$. 

Therefore, for this hypergraph and $t=2$, the direct construction yields an $\epsilon$-$2$-net of size $O_d(1/\epsilon^{2d-1})$, while the trivial construction (described at the end of Section~\ref{subsec:eps_net}) yields only a weaker upper bound of $O_d(1/\epsilon^{2d})$. With good bounds on $\VC H^2_{lc}$, the construction via $H^2_{lc}$ (see again Section~\ref{subsec:eps_net}) might provide even smaller $\epsilon$-$2$-nets. In the plane (namely, where $d=3$), it follows from \cite{GITS19} that $\VC H^2_{lc} \leq 5$, and so the upper bounds obtained using the direct construction and using $H^2_{lc}$ are the same -- $O(1/\epsilon^{5})$. 

\section{Geometric \texorpdfstring{$\epsilon$-2-Nets}{\textepsilon-2-Nets}}
\label{sec:geom}

For a fixed $\epsilon>0$, any hypergraph with VC-dimension $d$ and $n \geq \frac{C_d}{\epsilon^{2^{d+1}}}$ vertices admits, by Theorem~\ref{thm:main_ht>2}, an $\epsilon$-$2$-net of size $O(\frac{d}{\epsilon} \log \frac{1}{\epsilon})$. This leaves open two interesting questions:

\begin{enumerate}
    \item In cases where the hypergraph admits an $\epsilon$-net of small size, say $O(\frac{1}{\epsilon})$, does it also admit an $O(\frac{1}{\epsilon})$-sized $\epsilon$-2-net (or, more generally, $\epsilon$-$t$-nets)?
    \item Does this extend to smaller values of $n$?
\end{enumerate}

In this section we answer both in the affirmative for several classes of geometrically-defined hypergraphs.

\begin{definition}\label{def:intersection_hypergraph}
Given two families $B$ and $R$ of sets, the intersection hypergraph $H(B,R)$ is the hypergraph on vertex set $B$, where any $r \in R$ defines a hyperedge $\{ b \in B  :  b \cap r \neq \emptyset \}$.
\end{definition}

Note that $H(B,R)$ and $H(R,B)$ are (in general) not isomorphic but dual to each other. Intersection hypergraphs are ubiquitous in discrete and computational geometry. Particular attention is given to the case where either $B$ or $R$ is a set of points, with $H(B,R)$ respectively known as a \emph{primal hypergraph} defined by $R$ or a \emph{dual hypergraph} defined by $B$. See the survey \cite{MV17} and the references therein.

We present below several intersection hypergraphs that admit $O(\frac{1}{\epsilon})$-sized $\epsilon$-nets, and prove that each of them has $\epsilon$-2-nets of the same size. Furthermore, while \cref{thm:main_ht>2} applies only to hypergraphs with a very large number of vertices, the geometric hypergraphs discussed do not have to contain \enquote{many} vertices in order to guarantee the existence of \enquote{small} $\epsilon$-2-nets. In some cases (see, e.g., subsection \ref{app:frames}), the behavior is sharp: we can point out two constants $c_1<c_2$ s.t. if the number of vertices satisfies $\card{V} \geq \frac{c_2}{\epsilon}$ the hypergraph admits an $O(\frac{1}{\epsilon})$-sized $\epsilon$-2-net, while for $\card{V}\leq \frac{c_1}{\epsilon}$, 
there exist hypergraphs from the same family that admit only $\epsilon$-2-nets of size $\Omega(\frac{1}{\epsilon^2})$.

\subsection{Non-piercing regions}
For our first example we consider a large class of geometric objects introduced by Raman and Ray \cite{RR18}. A \emph{family of non-piercing regions} is a family of regions of $\Re^2$ such that for any two regions $\gamma_1$ and $\gamma_2$ the difference $\gamma_1 \setminus \gamma_2$ is connected. (Each region may contain holes. See \cite{RR18} for the exact definitions.)

This extends the more familiar notion of pseudo-disks.

\begin{theorem}\label{thm:linear_nets_non_piercing}
The intersection hypergraph of two families $B$ and $R$ of non-piercing regions with $B$ finite admits an $\epsilon$-net of size $O(\frac{1}{\epsilon})$ and, if $\epsilon \card B\geq 2$, an $\epsilon$-2-net of size $O(\frac{1}{\epsilon})$.
\end{theorem}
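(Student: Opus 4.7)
The plan is to prove both parts by exploiting the linear union (equivalently, linear trace) complexity of non-piercing families, and for the 2-net part to combine this with an iterative ``double-net'' strategy in the spirit of \cref{thm:vcone}. For the $\epsilon$-net part, I would invoke Raman and Ray's original argument, which is a standard Clarkson--Varadarajan bootstrapping: one samples a small set, uses the linear union complexity of non-piercing regions to bound the total weight of the residual hyperedges, and iterates to obtain an $\epsilon$-net of size $O(1/\epsilon)$ for $H(B,R)$.

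For the $\epsilon$-2-net part, set $n=\card B$ and assume without loss of generality that every hyperedge of $H(B,R)$ has cardinality at least $\epsilon n \geq 2$. Fix $\epsilon'=\epsilon/4$. The first step is to construct $N\subseteq B$ with $\card N=O(1/\epsilon)$ that meets every heavy hyperedge in at least two vertices. Take $N_1$ to be an $\epsilon'$-net of size $O(1/\epsilon)$ furnished by the first part. On $B\setminus N_1$, each heavy hyperedge has size at least $\epsilon n - \card{N_1}\geq \epsilon'(n-\card{N_1})$ once $n$ exceeds a suitable constant multiple of $1/\epsilon$ (the cases with $n$ below this threshold admit a trivial $O(1/\epsilon)$-sized 2-net directly). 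A second application produces an $\epsilon'$-net $N_2\subseteq B\setminus N_1$ of size $O(1/\epsilon)$, and $N:=N_1\cup N_2$ then hits every hyperedge of size $\geq\epsilon n$ at least twice. The candidate $\epsilon$-2-net is obtained by choosing, for each distinct trace $T\in\Pi_H(N)$ with $\card T\geq 2$, one arbitrary pair $\{u,v\}\subseteq T$.

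The main obstacle, and the step where the non-piercing hypothesis is essential, is to bound $\card{\Pi_H(N)}=O(\card N)$. This is a linear trace statement for the sub-hypergraph induced on $N$: the number of distinct sets $\{b\in N \colon b\cap r\neq\emptyset\}$ as $r$ ranges over $R$ is $O(\card N)$. It follows from the same linear union complexity that drives the $\epsilon$-net bound --- equivalently, from the fact that the ``Delaunay-type'' incidence graph between $N$ and $R$ has $O(\card N)$ faces (a property established for pseudo-disks and extended by Raman and Ray to non-piercing regions). Assuming this bound, the constructed family has $O(\card N)=O(1/\epsilon)$ pairs and, by construction, contains a pair inside every heavy hyperedge, completing the proof.
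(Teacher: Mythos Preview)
Your plan for the $\epsilon$-2-net has a genuine gap at the step you yourself flag as the main obstacle. The claim that $\card{\Pi_H(N)}=O(\card N)$ --- i.e., that the induced sub-hypergraph on $N$ has only linearly many distinct hyperedges --- is false for non-piercing families. Linear \emph{union} complexity bounds the number of faces on the boundary of a union, not the number of distinct intersection patterns; these are very different quantities. A concrete counterexample: let $B$ (and hence $N\subseteq B$) be a finite set of points in general position and $R$ the family of all Euclidean disks. Both are non-piercing families, yet the number of distinct traces on $N$ is $\Theta(\card N^{3})$. Thus ``one pair per trace'' could yield $\Omega(1/\epsilon^{3})$ pairs, not $O(1/\epsilon)$. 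The ``Delaunay-type'' object that Raman and Ray provide is a planar \emph{support} on $N$; it has $O(\card N)$ edges, but that does not bound the number of connected subgraphs (traces) it carries.

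The paper's proof repairs exactly this step using the planar support (\cref{thm:RR}). Once $K=K_1\cup K_2$ meets every heavy hyperedge at least twice --- constructed essentially as you propose --- the $\epsilon$-2-net is taken to be \emph{all edges} of a planar support of $H(K,R)$. Planarity gives $O(\card K)=O(1/\epsilon)$ edges, and since each heavy hyperedge induces a connected subgraph of the support on at least two vertices, it contains at least one such edge. So your double-net construction of $N$ is fine; what must change is replacing ``one pair per distinct trace'' by ``all edges of the planar support''. A smaller remark on the $\epsilon$-net part: the Clarkson--Varadarajan/union-complexity argument applies most directly to the dual setting (regions versus points); for regions versus regions the paper instead verifies the Pyrga--Ray conditions, using the planar support of the dual hypergraph $H(R',B)$.
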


The proof relies on several intermediary results. The first one is about an analogue of the Delaunay graph for non-piercing regions \cite{RR18}. The important specific case where the regions are pseudo-disks had already been studied \cite{ADEP,KS17,Kes18}.

\begin{definition}
  \label{def:support}
A \emph{planar support} for the hypergraph $(V,\edges)$  is a planar graph $G$ on the same vertex set $V$ such that any hyperedge in $\edges$ induces a connected subgraph of $G$.
\end{definition}

\begin{theorem}[\cite{RR18}]
\label{thm:RR}
Given two families $B$ and $R$ of non-piercing regions, $B$ finite, their intersection hypergraph $H(B,R)$ admits a planar support.
\end{theorem}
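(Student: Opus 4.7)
The plan is to construct the planar support $G$ as a ``Delaunay-like'' adjacency graph induced by the arrangement of regions in $B$, following the strategy of Raman and Ray. After a small perturbation placing the boundaries in general position while preserving every incidence with members of $R$, I fix for each $b \in B$ an interior point $p_b$, and include an edge $\{b_1, b_2\}$ in $G$ exactly when the set $(b_1 \cap b_2) \setminus \bigcup_{b \in B \setminus \{b_1,b_2\}} b$ is nonempty, that is, when $b_1$ and $b_2$ share an arrangement face that no other region of $B$ covers. I will refer to such a pair as \emph{directly adjacent}.

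The first main step is to embed $G$ planarly. For each edge $\{b_1, b_2\}$ I would route a curve from $p_{b_1}$ to $p_{b_2}$ passing through a chosen witness point in the direct-adjacency region. The non-piercing property is the essential ingredient here: were two such curves forced to cross, the crossing would witness a pair of regions $b, b' \in B$ for which $b \setminus b'$ is disconnected, contradicting the hypothesis. A standard charging of edges to arrangement faces then yields both embeddability and the linear bound $\lvert E(G)\rvert = O(\lvert B\rvert)$.

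The second main step is to verify the support property: for each $r \in R$, the set $B_r = \{b \in B : b \cap r \neq \emptyset\}$ induces a connected subgraph of $G$. Given $b, b' \in B_r$, the connectivity of $r$ allows one to trace a curve $\gamma$ inside $r$ from a point of $b \cap r$ to a point of $b' \cap r$. Along $\gamma$ I would extract a finite sequence of ``locally topmost'' regions in $B_r$; consecutive entries in this sequence are directly adjacent by construction, producing a walk in $G$ from $b$ to $b'$. Applying the non-piercing hypothesis to the enlarged family $B \cup \{r\}$ ensures that the extraction terminates and only visits regions already in $B_r$.

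The main obstacle I foresee is the planarity step: a naive drawing of the adjacency curves can create apparent crossings, especially when the regions are allowed to contain holes, so the argument requires either a careful inductive ``contraction'' procedure (iteratively shrinking a well-chosen region to a point while preserving the support structure and the non-piercing property) or the structural union-complexity theorem for non-piercing families from \cite{RR18}. The same structural tool also underlies the argument that the direct-adjacency edges alone suffice for the support property, so that no auxiliary edges---which could destroy planarity---need to be added.
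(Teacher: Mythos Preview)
The paper does not contain a proof of this statement: \cref{thm:RR} is quoted verbatim from Raman and Ray \cite{RR18} and used as a black box in the proof of \cref{thm:linear_nets_non_piercing}. There is therefore no argument in the present paper against which to compare your proposal.

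As a separate remark on your sketch itself: the support step as written has a gap. Your graph $G$ joins $b_1,b_2$ only when $(b_1\cap b_2)\setminus\bigcup_{b\neq b_1,b_2} b\neq\emptyset$, i.e.\ adjacency is decided relative to \emph{all} of $B$. But when you walk along a curve $\gamma\subseteq r$ and record the regions of $B_r$ you pass through, two consecutive regions $b,b'\in B_r$ along $\gamma$ may have $b\cap b'$ entirely covered by some $b''\in B\setminus B_r$ (a region that does not meet $r$ at all), so $\{b,b'\}\notin E(G)$ and the walk breaks. The non-piercing hypothesis on $B\cup\{r\}$ does not by itself rule this out. Raman and Ray handle this by a more careful inductive contraction rather than by taking the naive depth-one adjacency graph; your final paragraph gestures at this, but the body of the argument would need to be rewritten around that idea rather than around direct adjacency.
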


The following corollary has already been noted for families of pseudo-discs \cite{ADEP}.

\begin{corollary}
Given two families $B$ and $R$ of non-piercing regions, $\VC H(B,R)\leq 4$.
\end{corollary}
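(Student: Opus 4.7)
The strategy is to combine the planar support from Theorem~\ref{thm:RR} with the non-planarity of $K_5$. By Theorem~\ref{thm:RR} applied to the two non-piercing families $B$ and $R$, the intersection hypergraph $H(B,R)$ admits a planar support $G$ on the vertex set $B$. I would prove $\VC H(B,R)\leq 4$ by contradiction.

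Suppose some $V' = \{v_1,\ldots,v_5\}\subseteq B$ is shattered by $H(B,R)$. For each pair $\{v_i,v_j\}\subseteq V'$, pick a hyperedge $e_{ij}\in\edges$ with $e_{ij}\cap V' = \{v_i,v_j\}$; by the planar-support property, $G[e_{ij}]$ is connected and therefore contains a path $P_{ij}$ in $G$ from $v_i$ to $v_j$ whose interior lies in $B\setminus V'$. I would then use these ten paths to exhibit $K_5$ as a minor of $G$, contradicting planarity.

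To extract the minor one designates pairwise disjoint, $G$-connected branch sets $A_1,\ldots,A_5$ containing $v_1,\ldots,v_5$ respectively, and verifies that every pair $(A_i,A_j)$ is joined by a $G$-edge. A natural choice is to partition the internal vertices of $\bigcup_{i<j} P_{ij}$ among the branches via a nearest-landmark rule along each path. The main obstacle is confirming the pairwise-adjacency condition when several paths share internal vertices, since abstract planar support alone is not quite enough for this (for instance a star on six vertices is a planar support under which five leaves can be shattered by the hyperedges $\{v_i,v_j,h\}$). This is where the extra geometric structure of the non-piercing family is invoked, following the pseudo-disk argument of ADEP: each hyperedge of $H(B,R)$ corresponds to a region of the planar arrangement of the non-piercing sets, not merely to an abstract connected subgraph of $G$, and this rigidity forces the ten paths to yield all ten branch-to-branch adjacencies and hence the forbidden $K_5$ minor.
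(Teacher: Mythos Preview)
Your approach has a real gap at the last step. You correctly diagnose the obstruction yourself: with only the planar support $G$ of the \emph{full} hypergraph $H(B,R)$, the ten paths $P_{ij}$ may all funnel through common internal vertices, and the star example you give shows that no $K_5$ minor need arise from such paths in an abstract planar support. Your proposed cure---invoking unspecified ``extra geometric structure'' of non-piercing regions to force the ten branch-to-branch adjacencies---is not an argument but an assertion; nothing in Theorem~\ref{thm:RR} or in the definition of non-piercing regions gives you control over how the paths $P_{ij}$ overlap inside $G$, and the reference to ADEP does not supply a concrete mechanism.

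The paper sidesteps the whole difficulty with one clean observation you are missing: the non-piercing property is \emph{hereditary}, so Theorem~\ref{thm:RR} can be applied not to $H(B,R)$ but directly to $H(B',R)$, where $B'$ is the shattered $5$-set. In that restricted hypergraph, the hyperedge realizing $\{v_i,v_j\}$ in the shattering is exactly the two-element set $\{v_i,v_j\}$, and a two-vertex induced subgraph of a support is connected only if the two vertices are adjacent. Hence any planar support of $H(B',R)$ already contains $K_5$ as a \emph{subgraph}, not merely a minor, and planarity is contradicted immediately. The moral: restrict to the shattered set before invoking the planar support, and the paths $P_{ij}$ collapse to single edges.
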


\begin{proof}
   Let $B'\subseteq B$ be a shattered subset of vertices in $H(B,R)$. As the non-piercing property is clearly hereditary, the hypergraph $H(B',R)$ also admits a planar support. For every pair of vertices in $B'$ there exists a hyperedge of $H(B',R)$ that contains these two vertices and no other. Following \cref{def:support} these two vertices must share an edge in any planar support of $H(B',R)$. Thus said \emph{planar} support is a complete graph on $B'$, forcing $\card{B'}\leq 4$.
\end{proof}
\begin{proof}[Proof of \cref{thm:linear_nets_non_piercing}]
First we observe that $H(B,R)$ has $\epsilon$-nets of size $O(\frac 1 \epsilon)$. Since $H(B,R)$ is finite, we may assume that $R$ is finite as well. To paraphrase from Pyrga and Ray \cite[Theorem 4]{PyrgaRay}, the following properties suffice:
\begin{itemize}
    \item For any $0<\epsilon<1$ and any $B'\subseteq B$, $H(B',R)$ admits an $\epsilon$-net whose size depends only on $\epsilon$.
    \item There exist constants $\alpha > 0$, $\beta\geq0$ and $\tau >0$ s.t.\@ for any $R'\subseteq R$ there is a graph $G_{R'}=(R',E_{R'})$ with $\card {E_{R'}}\leq \beta \card{R'}$ so that for any element $b\in B$ we have $m_b\geq \alpha n_b -\tau$, where $n_b$ is the number of regions of $R'$ intersecting $b$ and $m_b$ is the number of edges in $E_{R'}$ whose both endpoints (which are regions of $R'$) intersect $b$.
\end{itemize}

The first condition is verified because $\VC H(B',R)\leq 4$ for every $B'$. For the second one, let $\alpha=\tau=1$ and $\beta=3$, and let $G_{R'}$ be a planar support of $H(R',B)$. (Note the use of duality!) The inequalities follow from its planarity and the connectedness of the subgraph \enquote{cut out} by each $b\in B$.

Finally, to obtain an $\epsilon$-2-net, let $K_1\subseteq B$ be an $\epsilon$-net for $H(B,R)$ of size $O(\frac 1 \epsilon)$. Let $R'$ consist of the regions of $R$, if any, that intersect $\geq \epsilon \card B$ regions of $B$ but only one of $K_1$, and let $K_2$ be an $\frac \epsilon 2$-net for $H(B\setminus K_1,R')$ also of size $O(\frac 1 \epsilon)$. Then the desired $\epsilon$-$2$-net consists of all edges in a planar support of $H(K_1\cup K_2, R)$.
\end{proof}

\subsection{Small union complexity}

Next, we prove the existence of a small $\epsilon$-$2$-net for the intersection hypergraph of regions in the plane with linear union complexity and points (i.e.\@ the dual hypergraph defined by the regions). 

The union complexity of a family of objects is the function $\kappa:\Nat \to \Nat$ that sends each $n\in\Nat$ to the number of faces of all dimensions in the boundary of the union of $\leq n$ objects, maximized over all subsets of $\leq n$ objects.  If $\kappa(n)=O(n)$, we say that the family has \emph{linear union complexity}. Families with linear union complexity include, e.g., families of pseudo-discs: the boundary of the union of $n\geq 3$ pseudo-discs consists of at most $6n-12$ arcs and as many vertices \cite{KLPS}. 

The $(\leq k)$-level complexity of the family is defined by counting all faces included in at most $k$ objects (not just on the boundary).
To make these definitions precise, one needs to define faces and their dimension; see the survey by Agarwal, Pach and Sharir \cite{APS-union}.

A specific case of the following result could also be derived from previous results on Mnets \cite{DGJM19}, if one adds the additional assumption that the regions have bounded \enquote{semi-algebraic description complexity}. (The proof of \cite{DGJM19} is involved and uses algebraic arguments).

\begin{theorem}
\label{thm:union_complexity}
Let $L$ be a finite family of regions in $\Re^2$ with linear union complexity
and let $P \subseteq \Re^2$ be a set of points. If $ \card L \geq \frac{2}{\epsilon}$ then $H(L,P)$ admits an $\epsilon$-$2$-net of size $O(\frac{1}{\epsilon})$.
\end{theorem}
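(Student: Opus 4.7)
\noindent\emph{Plan.}
My plan is to adapt the proof of \cref{thm:linear_nets_non_piercing} to the present setting, by replacing the planar support of Raman--Ray with the $O(1/\epsilon)$-sized $\epsilon$-net bound for regions of linear union complexity due to Varadarajan (and its refinement by Chan, Grant, K\"onemann and Sharpe), together with a planar-like witness structure coming from the linear union complexity hypothesis itself.

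First, I would apply the linear-union-complexity $\epsilon$-net bound to $H(L,P)$ with threshold $\epsilon/2$ to produce an $\frac{\epsilon}{2}$-net $N_1 \subseteq L$ of size $O(1/\epsilon)$. This is legal because linear union complexity is hereditary: every subfamily $L'\subseteq L$ also has linear union complexity. By the net property, every deep point $p$ (one with $\card{p^{*}}\ge \epsilon\card L$) lies in some $\ell_1 \in N_1$.

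Second, I would call a deep point \emph{residual} if it is covered by exactly one region of $N_1$, and let $R'\subseteq P$ denote the residuals. On the restricted hypergraph $H(L\setminus N_1,R')$ each hyperedge still has cardinality at least $\epsilon\card L -\card{N_1}$, which is a constant fraction of $\card{L\setminus N_1}$ as soon as $\card L\geq c/\epsilon$ for a suitable constant $c$; invoking the same $\epsilon$-net bound once more then produces a net $N_2 \subseteq L\setminus N_1$ of size $O(1/\epsilon)$ covering $R'$. (In the borderline range $2/\epsilon \le \card L<c/\epsilon$ one has $N_1=L$, every deep point is already doubly covered by $N_1$, and the second step is void.) The candidate $\epsilon$-$2$-net then consists of canonical pairs: for every non-residual $p$ pick an arbitrary element of $\binom{N_1}{2}\cap \binom{p^{*}}{2}$, and for every residual $p$ pick a covering pair in $N_1\times N_2$.

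The main obstacle is showing that the total number of distinct pairs used is $O(1/\epsilon)$ rather than the naive $O(1/\epsilon^2)$ one would get by building an $\epsilon$-net inside each $\ell\in N_1$ separately. I would resolve this by exhibiting a planar graph on $N_1\cup N_2$ whose edges are \emph{Delaunay-like witnesses}---pairs $\{\ell,\ell'\}$ for which some deep point lies in $\ell \cap \ell'$ but in no other region of $N_1\cup N_2$---and showing that every deep point admits such a witness pair. Euler's formula would then cap the number of witness pairs by $O(\card{N_1}+\card{N_2})=O(1/\epsilon)$, giving the bound. The existence of this planar witness graph for families of linear union complexity is the technical crux; it generalises the Delaunay graphs of pseudo-disks (Kedem--Livne--Pach--Sharir) and the planar support of non-piercing regions (Raman--Ray, \cref{thm:RR}) used in the proof of \cref{thm:linear_nets_non_piercing}, and I would derive it by a Clarkson--Shor style complexity bound applied to the $\leq\!1$-level of the arrangement of $N_1\cup N_2$, which the linear union complexity hypothesis forces to be linear in $\card{N_1}+\card{N_2}$.
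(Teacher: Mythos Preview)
Your first stage---building $K=N_1\cup N_2\subseteq L$ of size $O(1/\epsilon)$ that covers every heavy point at least twice---is fine and matches the paper (which cites Aronov--Ezra--Sharir rather than Varadarajan, but this is immaterial).

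The gap is in the second stage. You want to take as your $\epsilon$-$2$-net the set of \emph{depth-two witness pairs}: pairs $\{\ell,\ell'\}\subseteq K$ for which some point lies in $\ell\cap\ell'$ and in no other region of $K$. You correctly observe that Clarkson--Shor plus linear union complexity bounds the number of such pairs by $O(\card K)$; planarity is not even needed for that count. But your further claim---that \emph{every} heavy point is contained in at least one such witness pair---does not follow from Clarkson--Shor, and is false in general. Consider three regions $R_1,R_2,R_3$ with $R_i\cap R_j\subseteq R_k$ for every permutation $(i,j,k)$ of $(1,2,3)$: then every point of $R_1\cap R_2\cap R_3$ is at depth exactly $3$, there are no depth-two points whatsoever, and hence no witness pairs. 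Taking $L$ to be a disjoint union of $1/\epsilon$ such triples gives a family with linear union complexity in which your construction outputs the empty set. Nothing in the Raman--Ray or pseudo-disk support theorems helps here, since linear union complexity is strictly weaker than the non-piercing hypothesis; no planar support is known (or, as the example shows, should be expected) under the union-complexity assumption alone.

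The paper sidesteps this with an iterative peeling argument that you are missing. Having built $K$, it repeatedly applies the Clarkson--Shor bound to the \emph{current} family: the number of size-$2$ hyperedges in $H(K,P)$ is $\leq c\card K$, so some region $d\in K$ lies in at most $c$ of them; record those $\leq c$ pairs, delete $d$ from $K$, and repeat. This produces $O(\card K)=O(1/\epsilon)$ pairs in total. The crucial point is that depths change as regions are removed: any heavy point $p$, initially at depth $\geq 2$ in $K$, eventually reaches depth exactly $2$ at the step just before one of its last two covering regions is deleted, and at that step the pair containing $p$ is recorded. In the three-region example above, removing $R_1$ exposes $R_2\cap R_3$ as a depth-two cell, so $\{R_2,R_3\}$ is picked up at the next step. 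This peeling is the idea your proposal lacks.
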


\begin{proof}
Let $n:=\card L$. First, construct a set $K\subseteq L$ of size $O(\frac 1 \epsilon)$ such that every \enquote{heavy} point of $P$ is included in at least two elements of $K$, as in the proofs of \cref{thm:vcone} or \cref{thm:linear_nets_non_piercing}. This relies on the existence of $\epsilon$-nets of size $O(\frac{1}{\epsilon})$  for $H(L,P)$, a result of Aronov, Ezra and Sharir \cite{AES}.

Since linear union complexity is a hereditary property, $K$ as a subset of $L$ also has linear union complexity. By a standard argument using the Clarkson--Shor theorem \cite{ClarksonS89}, the $(\leq 2)$-level complexity of $K$ is linear as well. Hence, by Euler's formula, the number of hyperedges of size $2$ in $H(K,P)$ (whose order of magnitude is equal to the number of $(\leq 2)$-level faces in the arrangement of $K$) is at most $c\card K$ for some constant $c$. By the pigeonhole principle, some region $d\in K$ participates in at most $c$ such hyperedges (i.e., pairs of regions). We pick these at most $c$ pairs of regions to be elements of the $\epsilon$-$2$-net we construct, and repeat the process for $K\setminus\{d\}$. 

We continue in this fashion until all elements of $K$ are removed, and set the $\epsilon$-$2$-net $N$ to be the set of pairs we picked. Clearly, $\card N = O(\card K) = O(\frac{1}{\epsilon})$. To see that $N$ is indeed an $\epsilon$-$2$-net, let $p$ be a point that belongs to at least $\epsilon n$ regions of $L$. By construction, $p$ belongs to at least two regions of $K$. Consider the process in which the elements of $K$ are gradually removed, until none of them are left. As a single region is removed at every step, we can look at the step in which the number of remaining regions that contain $p$ is reduced from $2$ to $1$. Since at that step $p$ is included in exactly two regions of the arrangement,
the corresponding pair of regions is added to the $\epsilon$-$2$-net. Hence, $p$ is covered by both elements of a pair in the $\epsilon$-$2$-net, as asserted. This completes the proof.
\end{proof}

\begin{remark}
By essentially the same argument, the hypergraph $H(L,P)$ admits an $\epsilon$-$t$-net of size $O_t(\frac{1}{\epsilon})$ for any constant $t\leq\epsilon \card L$.
\end{remark}

We can extend \Cref{thm:union_complexity} to a family $L$ with union complexity $\kappa(n)=n \cdot f(n)$. In this case, the size of the $\epsilon$-$2$-net is $O(\frac{1}{\epsilon}\cdot{\log f(\frac{1}{\epsilon})   \cdot f(\frac 1 \epsilon \cdot \log f(\frac{1}{\epsilon})) })$. For example, if $\kappa(n)=n\log n$, then one obtains an $\epsilon$-$2$-net of size $O(\frac 1 \epsilon \cdot \log \frac{1}{\epsilon} \cdot \log \log \frac{1}{\epsilon})$. 

Indeed, by \cite{AES}, the hypergraph $H(L,P)$ admits an $\epsilon$-net of size $O(\frac 1 \epsilon \cdot \log f(\frac{1}{\epsilon}))$. Let $n'= \frac 1 \epsilon \cdot \log f(\frac{1}{\epsilon})$. By the Clarkson--Shor theorem \cite{ClarksonS89}, the $(\leq 2)$-level complexity is bounded by $O(n' \cdot f(n'))$, hence there exists a region that participates in at most $f(n')$ hyperedges of order 2. This means that the size of the obtained $\epsilon$-$2$-net is bounded by $O(n' \cdot f(n'))
$.

\subsection{More Geometric \texorpdfstring{$\epsilon$-2-Nets}{\textepsilon-2-Nets}}
\label{app:geom}

\subsubsection{Frames}
\label{app:frames}
The next class of intersection hypergraphs we consider is that of points with respect to frames, where a frame is the boundary of an axis-parallel rectangle. 

\begin{proposition}
\label{thm:frames}
Let $P$ be a finite set of points of $\Re^2$ and let $F$ be a family of frames. If $\card{P} \geq \frac{5}{\epsilon}$, then $H(P,F)$ admits an $\epsilon$-$2$-net of size $\leq \frac 8 \epsilon - 2$.
\end{proposition}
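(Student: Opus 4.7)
The plan is to reduce the construction of an $\epsilon$-$2$-net to a one-dimensional covering problem. Write $n=\card P$ and $m=\lceil \epsilon n/4\rceil$. By the pigeonhole principle over the four axis-parallel sides of a frame, any frame with $\geq\epsilon n$ points of $P$ on its boundary has at least one side carrying $\geq m$ of those points (corners, counted twice in the sum, only help the argument). Since a side is a single segment of some axis-parallel line $\ell$, these points are collinear and occupy a \emph{contiguous} block in the linear order of $P\cap\ell$ along $\ell$. Thus it suffices, for every axis-parallel line $\ell$ containing at least $m$ points of $P$, to include in the net a collection of pairs such that every $m$-element contiguous run of points in $P\cap\ell$ contains one of them.

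For this 1D task I would list $P\cap\ell$ in order along $\ell$ as $q_1,\ldots,q_{k_\ell}$ and take the pairs $\{q_{1+j(m-1)},\,q_{2+j(m-1)}\}$ for $j=0,1,2,\ldots$ as long as $2+j(m-1)\leq k_\ell$. Any window $[a,a+m-1]$ of $m$ consecutive indices contains such a pair, because the interval $[a-1,a+m-3]$ consists of $m-1$ consecutive integers and therefore meets every residue class modulo $m-1$; in particular it contains a multiple $j(m-1)$, providing the required pair. This uses at most $\lceil (k_\ell-1)/(m-1)\rceil$ pairs per heavy line.

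To conclude, I would sum over all heavy horizontal and vertical lines. Each heavy line holds at least $m$ points, so since each point of $P$ lies on exactly one horizontal and one vertical line, there are at most $4/\epsilon$ heavy horizontal and at most $4/\epsilon$ heavy vertical lines, and $\sum_{\ell}k_\ell\leq n$ separately in each direction. Combining these bounds yields a total of at most $2n/(m-1)+8/\epsilon=O(1/\epsilon)$ pairs. The hypothesis $\card P\geq 5/\epsilon$ guarantees $\epsilon n\geq 5$, hence $m\geq 2$, so the 1D construction is well defined. The main obstacle is the exact constant in the stated bound: the loose sketch gives only $O(1/\epsilon)$, and attaining the claimed $8/\epsilon-2$ requires amortizing the per-line ceiling overhead against the strict bound on the number of heavy lines, exploiting the hypothesis on $\card P$ in an essential way. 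The near-tightness of the threshold $5/\epsilon$, in view of the $\Omega(1/\epsilon^2)$ lower bound alluded to in the introduction of this section for slightly smaller $\card P$, signals that this final bookkeeping must be carried out with care.
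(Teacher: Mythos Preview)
Your approach is essentially the paper's: apply the pigeonhole principle over the four sides of a heavy frame to find one side carrying $\geq m=\lceil\epsilon n/4\rceil$ points, thereby reducing to an $\frac{\epsilon}{4}$-$2$-net for axis-parallel segments, and then solve that problem line by line as a one-dimensional interval-covering task. The paper does exactly this, and also uses the hypothesis $n\geq 5/\epsilon$ only to ensure $m\geq 2$.

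The one difference is the choice of pairs along a line. The paper takes the ``milestone'' pairs $\{p_{im},p_{(i+1)m}\}$, which leads directly to the count $\leq n/m-1\leq 4/\epsilon-1$ per axis and hence to $8/\epsilon-2$ in total. You instead take consecutive pairs $\{p_{1+j(m-1)},p_{2+j(m-1)}\}$ at spacing $m-1$, and verify that every length-$m$ window contains one of them. Your pairing is the more careful one: note that the paper's milestone pair has its two elements $m$ apart, so a window of exactly $m$ consecutive points need not contain both (for instance, with $m=3$ and $p_1,\dots,p_7$ on a line, the single pair $\{p_3,p_6\}$ misses the window $\{p_4,p_5,p_6\}$). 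The paper's write-up of this step is very terse, and your spacing-$(m-1)$ fix is the right repair; the price is exactly the constant-factor loss you flag, landing at $O(1/\epsilon)$ rather than the specific $8/\epsilon-2$. Your self-assessment of where the argument stands is accurate.
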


\begin{proof}
  Let $n=\card P$. If at least $\epsilon n$ points lie on a same frame, then one of its four sides contains at least $\lceil \frac \epsilon 4 n\rceil\geq 2$ of them. Thus it is sufficient to take an $\frac \epsilon 4$-2-net for $P$ with respect to the family of all axis-parallel segments. For any such axis-parallel segment $\ell$,  take all pairs consisting of the $i\lceil \frac \epsilon 4 n\rceil$-th  and $(i+1)\lceil \frac \epsilon 4 n\rceil$-th vertices on $\ell$. In total, at most $\frac 8 \epsilon - 2$ pairs also suffice to pierce all axis-parallel segments.
\end{proof}

The interesting behavior here is that the requirement that the vertex set is large cannot be omitted.
Consider for example the set $P$ of $n$ points depicted in \cref{fig:LB}, and let $\epsilon = \frac{2}{n}$. For any pair $\{  p_1,p_2   \}  \subset P$ such that $p_1$ is in the first quadrant and $p_2$ is in the third quadrant, there exists a frame $r$ such that $r \cap P = \{  p_1,p_2   \}$. Hence, any $\epsilon$-2-net for the intersection hypergraph of $P$ and a sufficiently rich family of frames is of size $\geq \frac{n^2} 4 =\frac{1}{\epsilon^2}$.

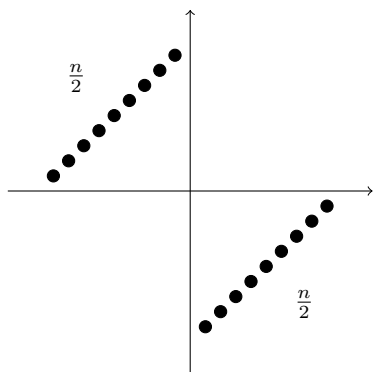
\begin{figure}
\centering
\begin{tikzpicture}[scale=2]
\draw[->] (-1.2,0) -- (1.2,0); \draw[->] (0,-1.2) -- (0,1.2);
\foreach \x in {0.1,0.2,...,0.9} {
\draw[fill=black] (1-\x,-\x) circle (0.04);
\draw[fill=black] (-\x,1-\x) circle (0.04);};
\node at (-0.75,0.75) {$\frac n 2$};
\node at (+0.75,-0.75) {$\frac n 2$};
\end{tikzpicture}

\caption{A set $P$ of $n$ points such that in the hypergraph of $P$ with respect to axis-parallel rectangles, for $\epsilon=\frac{2}{n}$, each $\epsilon$-2-net is of size $\Omega(n^2)=\Omega(\frac{1}{\epsilon^2})$.}
\label{fig:LB}
\end{figure}

\subsubsection{Axis-parallel rectangles}
\label{app:rectangles}

We conclude this section with the intersection hypergraph $H(P,R)$ of points and axis-parallel rectangles. This hypergraph admits an $\epsilon$-net of size  $O(\frac{1}{\epsilon}  \log \log \frac{1}{\epsilon})$ \cite{AES} and, if $\card{P} \gg \frac{1}{\epsilon^2}$, an $\epsilon$-2-net of size $O(\frac{1}{\epsilon} \log \frac{1}{\epsilon})$ by \cref{thm:main_ht>2}. (This last bound follows from the known fact that $\pi^*_{H(P,R)}(m)=\Theta(m^2)$, and from \cref{thm:main_ht>2}.) 

We extend the result to smaller values of $\card{P}$, at the expense of slightly increasing the size of the obtained $\epsilon$-2-net.

\begin{theorem}
\label{thm:rectangles}
Let $P$ be a finite set of points in $\Re^2$, and let $R$ be a family of axis-parallel rectangles. Assume that $\epsilon\card{P}\geq 3$. Then $H(P,R)$ admits an $\epsilon$-$2$-net of size $O(\frac{1}{\epsilon}  \log \frac{1}{\epsilon}  \log \log \frac{1}{\epsilon})$.
\end{theorem}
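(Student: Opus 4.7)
The plan is to combine the Aronov--Ezra--Sharir $\epsilon$-net theorem \cite{AES}, which supplies $\epsilon$-nets of size $s:=O(\tfrac 1\epsilon\log\log\tfrac 1\epsilon)$ for $H(P,R)$, with \cref{thm:main_ht>2} (in which $d^*=2$ here). The latter alone yields an $\epsilon$-$2$-net of size $O(\tfrac 1\epsilon\log\tfrac 1\epsilon)$, but only when $\card P\geq C_1/\epsilon^2$; the theorem thus reduces to bridging the gap below this vertex-count threshold.

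The first step is an oversampling phase in the spirit of \cref{thm:vcone} and \cref{thm:linear_nets_non_piercing}: build $A\subseteq P$ with $\card A=O(s)$ such that every heavy rectangle $r$ (one with $\card{r\cap P}\geq\epsilon\card P$) satisfies $\card{r\cap A}\geq 2$. Let $A_1$ be an $(\epsilon/2)$-net from \cite{AES}, and let $R'$ be the heavy rectangles meeting $A_1$ exactly once. Each $r\in R'$ retains $\geq \epsilon\card P - 1\geq \tfrac 12\epsilon\card P$ points in $P\setminus A_1$ thanks to the hypothesis $\epsilon\card P\geq 3$, so a second application of \cite{AES} to $H(P\setminus A_1, R')$ produces $A_2$, and $A:=A_1\cup A_2$ is as required.

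The second step extracts pairs from $A$ via a dyadic decomposition. Partition the heavy rectangles by $R_j:=\{r:2^j\leq\card{r\cap A}<2^{j+1}\}$ for $1\leq j\leq\lceil\log_2\card A\rceil$ and build a 2-net $N_j\subseteq\binom{A}{2}$ per level, taking $N:=\bigcup_j N_j$. The hypergraph $H_j:=(A,R_j)$ inherits the dual shatter function bound $O(m^2)$ and VC-dimension $\leq 4$, and each of its hyperedges has cardinality $\geq\epsilon_j\card A$ with $\epsilon_j:=2^j/\card A$. Whenever $2^j\geq\sqrt{C_1\card A}$, the hypothesis $\card A\geq C_1/\epsilon_j^2$ of \cref{thm:main_ht>2} is met and a 2-net of size $O(\tfrac 1{\epsilon_j}\log\tfrac 1{\epsilon_j})=O(\tfrac{\card A}{2^j}\log\tfrac{\card A}{2^j})$ drops out; summed over the admissible $j$, this telescopes into $O(\card A\log\card A)=O(\tfrac 1\epsilon\log\tfrac 1\epsilon\log\log\tfrac 1\epsilon)$, matching the stated bound. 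For the tail levels $2^j<\sqrt{C_1\card A}$ I recursively apply the same scheme to $H_j$ at scale $\epsilon_j$ in place of $\epsilon$: each recursive call geometrically shrinks the effective problem, so after $O(\log\log\tfrac 1\epsilon)$ rounds \cref{thm:main_ht>2} applies uniformly, with the contributions absorbed into the same asymptotic budget.

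The main obstacle is this recursive tail: I need the per-level cost to decay geometrically across recursion depth so that the $\log\log$ factor in the AES net size does not accumulate. Verifying this bookkeeping is the delicate part; a clean alternative is to handle the low levels combinatorially by bounding the number of distinct traces $r\cap A$ of cardinality $<\sqrt{\card A}$ via a Clarkson--Shor style count and assigning one pair per such trace, falling back on the arrangement-theoretic structure of axis-parallel rectangles rather than on \cref{thm:main_ht>2}.
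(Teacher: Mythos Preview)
Your oversampling phase matches the paper's. The second step, however---the dyadic decomposition by $\card{r\cap A}$---has a real gap at the bottom levels, and neither of your two proposed fixes closes it.

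Consider the level $j=1$: heavy rectangles with exactly $2$ or $3$ points in $A$. Any $2$-net must contain every pair that arises as a full trace $r\cap A$ of size $2$, and for axis-parallel rectangles the number of such ``empty-rectangle'' pairs on an $m$-point set can be $\Theta(m^2)$; this is exactly the configuration of \cref{fig:LB}. Nothing in the construction of $A$ as a double AES net rules this out, so the Clarkson--Shor alternative cannot deliver an $O(\card A\operatorname{polylog}\card A)$ count. The recursion alternative also stalls at $j=1$: with $\epsilon_1=2/\card A$, rerunning the AES oversampling on $A$ yields a set of size $O(\tfrac{1}{\epsilon_1}\log\log\tfrac{1}{\epsilon_1})=O(\card A\log\log\card A)\geq\card A$, so there is no shrinkage and the ``geometric decay across recursion depth'' you invoke does not materialize at the lowest dyadic levels.

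The paper sidesteps this via a different decomposition, orthogonal to yours. It oversamples to a set $K$ meeting every heavy rectangle in at least \emph{three} points, normalizes coordinates so that $K\subseteq\{0,\dots,k-1\}^2$, and then buckets rectangles by \emph{aspect ratio} $2^i$ for $\card i\leq\log k$ (the Ackerman--Pinchasi trick \cite{AP13}). Rectangles of a fixed aspect ratio form a family of pseudo-discs, so \cref{thm:linear_nets_non_piercing} gives an $O(k)$-size $2$-net per bucket; every heavy rectangle splits into two same-aspect-ratio halves, one of which carries $\geq 2$ points of $K$. This yields $O(k\log k)=O(\tfrac{1}{\epsilon}\log\tfrac{1}{\epsilon}\log\log\tfrac{1}{\epsilon})$ pairs in total. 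The structural fact that fixed-aspect-ratio rectangles are pseudo-discs is precisely the geometric input your approach lacks; \cref{thm:main_ht>2} alone does not exploit enough of the specific structure of axis-parallel rectangles to handle the thin tail.
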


\begin{proof}

  Let $n:=\card P$. First, we construct a set $K\subseteq P$ that intersects every \enquote{heavy} rectangle of $R$ at least three times. Let $K_1$ be an $\epsilon$-net of size $O(\frac{1}{\epsilon}  \log \log \frac{1}{\epsilon})$  for $H(P,R)$, whose existence is a result of Aronov, Ezra and Sharir \cite{AES}. Let $R'$ consist of those rectangles in $R$, if any, that contain $\epsilon n$ points of $P$ but only either one or two points of $K_1$. They contain at least $\epsilon n-2 \geq \epsilon n /3$ points of $P\setminus K_1$ which is greater than $1$ by our assumption that $\epsilon n \geq3$. Then let $K_2$ be an $\epsilon/3$-net for $H(P\setminus K_1, R')$. It hits every rectangle of $R'$ at least once. Finally let $R''$ consist of the rectangles in $R'$ that contain only two points of $K_1\cup K_2$ and take an $\epsilon/3$-net $K_3$ for $H(P\setminus K_1\cup K_2,R'')$. We let $K=K_1\cup K_2\cup K_3$. It contains $k=O(\frac{1}{\epsilon}  \log \log \frac{1}{\epsilon})$ points, and each rectangle of $R$ containing $\epsilon n$ points of $P$ contains at least three points of $K$. Thus we restrict ourselves to finding a $\frac 3 k$-$2$-net for $H(K,R)$.

As coordinate-wise monotone transformations of the plane do not affect the intersections between $K$ and $R$, we may assume $K \subset \{0,1,\dots,k-1\}^2$ and that the width and height of every rectangle in $R$ are in $\{1,2,\dots,k\}$. Define the aspect ratio of a rectangle as the ratio of its height to its width, and for each integer $i$ with $\card i \leq\log k$ let $R_i$ be the family of all axis-parallel rectangles of aspect ratio $2^i$. Every such $R_i$ is a family of pseudo-discs: the boundaries of any two of its elements intersect at most twice. 
Take a $\frac 2 k$-$2$-net of size $O(k)$ for each $R_i$, altogether $O(k \log k)$ pairs.
Each heavy rectangle of $R$ is also the union of two rectangles with the same aspect ratio $2^i$ for some $i$, one of which must contain at least two points of $K$. (This idea is borrowed from Ackerman and Pinchasi \cite{AP13}.) 

Hence, the pairs form a $\frac 2 k$-$2$-net of size for each $H(K,R_i)$, a $\frac 3 k$-$2$-net for $H(K,R)$ and an $\epsilon$-$2$-net for $H(P,R)$.
\end{proof}

The case of axis-parallel rectangles illustrates a common phenomenon with $\epsilon$-2-nets: the bounds on the size of $\epsilon$-2-nets worsen as the number of rectangles decreases until, in the smallest case, hypergraphs on only $\frac 2 \epsilon$ vertices may require as many as $\Omega(\frac 1 {\epsilon^2})$ pairs.

\section{Applications of \texorpdfstring{$\epsilon$-$t$-Nets}{\textepsilon-t-Nets}}
\label{sec:applications}

We now describe several contexts in which $\epsilon$-$t$-nets appear naturally, and present possible applications of our results.

\subsection{The Tur\'{a}n problem for hypergraphs}

Tur\'{a}n's celebrated theorem in graph theory determines the largest possible number of edges in a graph that does not contain the complete graph $K_t$ (for a fixed integer $t$) as a subgraph. In 1941, Tur\'{a}n raised a similar question for hypergraphs: the maximum number of hyperedges that a $t$-uniform hypergraph on $n$ vertices can possess without containing the complete $t$-uniform hypergraph on $k$ vertices as a sub-hypergraph is the \emph{Tur\'an number} $T(n,k,t)$.

While Tur\'{a}n's theorem for graphs is sharp, the problem for hypergraphs remains notoriously difficult. For $k>t>2$, there are no known closed expressions for $n\mapsto T(n,k,t)$ (whereas $t=2$ corresponds to graphs). Determining $T(n,4,3)$ is considered one of the major open problems of hypergraph theory \cite{Keevash}.

Computing $T(n,k,t)$ amounts to finding the cardinality of a smallest $(k/n)$-$t$-net for the complete $k$-uniform hypergraph on $n$ vertices. Indeed, let $H=(V,\binom V k)$ be said complete hypergraph and let $N\subseteq \binom V t$. The set $N$ of $t$-subsets is a $(k/n)$-$t$-net for $H$ if and only if every set of $k$ vertices of $V$ contains a $t$-subset in $N$. Equivalently, there is no set of $k$ vertices $U\in \binom V k$ such that $\binom U t \subseteq \binom V t \setminus N$. Still equivalently, the complement $\binom V t\setminus N$ is the set of hyperedges of a $t$-uniform hypergraph on $V$ that does not contain the complete $t$-uniform hypergraph on $k$ vertices. We conclude:
\begin{proposition}
The size of a smallest $(k/n)$-$t$-net for the complete $k$-uniform hypergraph on $n$ vertices is $\binom n t - T(n,k,t)$.
\end{proposition}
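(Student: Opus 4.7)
The plan is to establish the equality by a direct bijective correspondence between $(k/n)$-$t$-nets for the complete $k$-uniform hypergraph $H=(V,\binom V k)$ and $K_k^{(t)}$-free $t$-uniform hypergraphs on the vertex set $V$, obtained by complementation inside $\binom V t$.

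First, I would unfold the definition of a $(k/n)$-$t$-net. Since $\epsilon n = k$ and every hyperedge of $H$ has cardinality exactly $k$, the condition becomes: $N\subseteq \binom V t$ is a $(k/n)$-$t$-net if and only if for every $U\in\binom V k$ there exists $s\in N$ with $s\subseteq U$. Next, I set $M:=\binom V t\setminus N$ and rephrase: this condition is equivalent to saying that no $U\in\binom V k$ satisfies $\binom U t\subseteq M$, i.e., the $t$-uniform hypergraph $(V,M)$ contains no copy of $K_k^{(t)}$ as a sub-hypergraph. Conversely, any $K_k^{(t)}$-free $t$-uniform hypergraph $(V,M)$ yields a $(k/n)$-$t$-net $N=\binom V t\setminus M$ by the same reasoning, so the map $N\mapsto \binom V t\setminus N$ is a bijection between the two families.

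Finally, since $\card N +\card M = \binom n t$ for any such pair, minimizing $\card N$ is equivalent to maximizing $\card M$ over all $K_k^{(t)}$-free $t$-uniform hypergraphs on $n$ vertices, which by definition equals $T(n,k,t)$. This yields $\min \card N = \binom n t - T(n,k,t)$. I do not anticipate a genuine obstacle here: the whole argument consists of chaining together the definitions of $(k/n)$-$t$-net and of the Tur\'an number, with the bijection between nets and Tur\'an-extremal hypergraphs being the only substantive observation, and the paragraph preceding the proposition in fact already sketches each of these equivalences.
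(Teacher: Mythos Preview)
Your proposal is correct and matches the paper's argument exactly: both establish the bijection $N\leftrightarrow \binom V t\setminus N$ between $(k/n)$-$t$-nets and $K_k^{(t)}$-free $t$-uniform hypergraphs on $V$, then read off the minimum via $\card N=\binom n t-\card M$ and the definition of $T(n,k,t)$. As you note yourself, the paragraph preceding the proposition already contains precisely this chain of equivalences, so there is nothing to add.
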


\subsection{Edge coloring of hypergraphs}

Ackerman, Keszegh and P{\'{a}}lv{\"{o}}lgyi \cite{AKP18} introduced the problem of coloring $t$-subsets of vertices in a hypergraph in such a way that each hyperedge contains $t$-subsets of all colors. They focused on coloring $2$-subsets (i.e., edges) in geometric hypergraphs, and in particular on coloring the pairs that are themselves hyperedges of the hypergraph. They obtained constant bounds on the number of colors required for various classes of geometric hypergraphs. One of their main results is the following.
\begin{theorem}[{\cite[Theorem 4]{AKP18}}]
	For every dimension $d$ and integers $t \geq 2$, $k \geq 1$ and $h$, there exists an integer $m$ with the following property: given a set $H$ of $h$ half-spaces in $\Re^d$, the $t$-subsets of every finite set of points $S$ in $\Re^d$ can be colored with $k$ colors such that every half-space of $H$ that contains at least $m$ points from $S$ contains a $t$-subset of points of each of the $k$ colors.
\end{theorem}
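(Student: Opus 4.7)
The plan is to use the probabilistic method, exploiting the fact that $H$ consists of only $h$ half-spaces. Choose the threshold $m$ so that $\binom{m}{t} > k\ln(hk)$; such an $m$ exists and depends only on $h$, $k$, and $t$. I will show that a uniform random $k$-coloring of $\binom{S}{t}$ already satisfies the required property with positive probability, so in particular a valid coloring exists.

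First I would color each $t$-subset of $S$ independently and uniformly at random with a color in $\{1,\ldots,k\}$. For a fixed half-space $h_\ell\in H$ with $\card{h_\ell\cap S}\geq m$ and a fixed color $j\in[k]$, the probability that no $t$-subset of $h_\ell\cap S$ receives color $j$ is
\[
\left(1-\frac{1}{k}\right)^{\binom{\card{h_\ell\cap S}}{t}} \;\leq\; \exp\!\left(-\frac{1}{k}\binom{m}{t}\right).
\]
Since $\card H\leq h$, a union bound over the at most $hk$ such events gives a total failure probability of at most $hk\cdot\exp(-\binom{m}{t}/k)$, which is strictly less than $1$ by the choice of $m$. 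The resulting threshold $m=O\!\left(t(k\log(hk))^{1/t}\right)$ depends on neither $d$ nor $\card S$.

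The mild surprise here is that once $H$ is finite no $\epsilon$-$t$-net machinery is needed, and the argument is essentially immediate; a deterministic variant follows from a greedy construction that processes the $hk$ pairs $(j,h_\ell)$ in turn and picks for each one a not-yet-used $t$-subset of $h_\ell\cap S$ to be coloured $j$, which is feasible as long as $\binom{m}{t}>hk$. The real obstacle appears only in the natural strengthening to \emph{all} half-spaces in $\Re^d$: by Sauer's lemma the trace on $S$ contains up to $\Omega(n^{d+1})$ distinct heavy half-spaces, so the naive union bound forces $m$ to grow with $n=\card S$. In that regime the strategy would be to build $k$ pairwise disjoint $\epsilon$-$t$-nets for the heavy half-spaces via \cref{thm:main_ht>2} (the hypergraph has VC-dimension $d+1$ and polynomial dual shatter function), colour the $i$-th net with colour $i$ and the rest arbitrarily; the technical crux would be to iterate the net construction while removing the at most $O_{d,t}(\epsilon^{-1}\log\epsilon^{-1})$ vertices consumed in each round, without letting heavy half-spaces drop below the threshold $m$ in the reduced hypergraph.
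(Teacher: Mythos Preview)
The paper does not prove this theorem; it is quoted verbatim from \cite{AKP18} as background for the authors' own Proposition~\ref{Prop:Edge-coloring-rainbow}, so there is no ``paper's proof'' to compare against.

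Your argument is nonetheless correct. Because the statement fixes the number $h$ of half-spaces in advance, the union bound runs over only $hk$ bad events and the random $k$-coloring succeeds once $\binom{m}{t}>k\ln(hk)$; the dimension $d$ is indeed irrelevant. Your greedy deterministic variant is also fine: when you reach a pair $(j,h_\ell)$, at most $hk-1$ of the $t$-subsets of $S$ have been assigned a color, and since $\binom{\card{h_\ell\cap S}}{t}\geq\binom{m}{t}>hk$ at least one $t$-subset inside $h_\ell$ is still free to receive color $j$.

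Your closing paragraph correctly identifies that the interesting case is when $H$ is the family of \emph{all} half-spaces, which is what the paper addresses (in a weaker quantitative form) via $\epsilon$-$t$-nets in Proposition~\ref{Prop:Edge-coloring-rainbow}. One small caution there: \cref{thm:main_ht>2} produces $\epsilon$-$t$-nets whose members are pairwise disjoint $t$-sets, but iterating it $k$ times after deleting the \emph{vertices} used (as you sketch) is not the route the paper takes; the paper instead works in $H^2$ and removes the chosen \emph{pairs}, which avoids the issue of heavy hyperedges shrinking below threshold.
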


Using $\epsilon$-2-nets, we obtain a result in the same spirit for all hypergraphs with bounded VC-dimension. 

\begin{proposition}\label{Prop:Edge-coloring-rainbow}
	Let $H$ be a hypergraph on $n$ vertices with VC-dimension $d$ and let $\epsilon\in(\frac 2 n, 1)$. Then the pairs of vertices of $H$ can be colored with $\Omega (\frac{n^2 \cdot \epsilon^{4}}{d} (\log \frac{1}{\epsilon})^{-1})$ colors such that each hyperedge of $H$ of size at least $\epsilon n$ contains a pair of each color.
\end{proposition}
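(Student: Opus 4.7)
The plan is to iteratively construct $k = \Omega\!\left(\frac{n^2 \epsilon^4}{d \log(1/\epsilon)}\right)$ pairwise-disjoint families $N_1,\dots,N_k \subseteq \binom{V}{2}$, each of which is an $\epsilon$-$2$-net of $H$. Assigning color $i$ to every pair in $N_i$ (and distributing the remaining pairs arbitrarily, say to color $1$) then yields the rainbow property at once: every $e \in \edges$ with $\card e \geq \epsilon n$ contains a pair of $N_i$ for each $i$, and hence a pair of each of the $k$ colors.

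To extract the $N_i$, the plan is to work in the pair hypergraph $H^2 = \bigl(\binom V 2, \{\binom e 2 : e\in\edges\}\bigr)$, which by \cref{thm:vc_tuples} has VC-dimension at most $\gamma_2 d = O(d)$. Set $\tilde{\epsilon} := \epsilon^2/4$. At step $i$, let $A_i := \binom V 2 \setminus (N_1\cup\dots\cup N_{i-1})$ be the available pairs, and apply the classical Haussler--Welzl $\epsilon$-net theorem to the trace hypergraph $H^2\bigm|_{A_i}$ (whose VC-dimension is still $\leq \gamma_2 d$) with parameter $\tilde\epsilon$. This produces $N_i \subseteq A_i$ of size
\[
s = O\!\left(\tfrac{d}{\tilde\epsilon}\log\tfrac{1}{\tilde\epsilon}\right) = O\!\left(\tfrac{d}{\epsilon^{2}}\log\tfrac{1}{\epsilon}\right),
\]
meeting every hyperedge of $H^2\bigm|_{A_i}$ of size $\geq \tilde\epsilon \card{A_i}$. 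Provided each hyperedge $\binom e 2\cap A_i$ with $\card e\geq\epsilon n$ meets this density threshold, $N_i$ contains a pair inside $e$ and is therefore an $\epsilon$-$2$-net of $H$.

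The density check uses $\binom{\epsilon n}{2}\geq \frac{\epsilon^2 n^2}{4}$ (since $\epsilon n>2$) and $\tilde\epsilon \card{A_i}\leq \tilde\epsilon \binom n 2 \leq \frac{\epsilon^2 n^2}{8}$. Hence as long as the total number of pairs removed so far satisfies $(i-1)s \leq \frac{\epsilon^2 n^2}{8}$, every heavy $e$ obeys
\[
\bigl| \tbinom e 2 \cap A_i\bigr| \;\geq\; \tbinom {\epsilon n} 2 - (i-1)s \;\geq\; \tfrac{\epsilon^2 n^2}{8} \;\geq\; \tilde\epsilon \card{A_i},
\]
so the net step succeeds. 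This permits $k \geq \lfloor \epsilon^2 n^2/(8s)\rfloor = \Omega\!\bigl(n^2\epsilon^4/(d\log(1/\epsilon))\bigr)$ iterations, which is the promised number of colors.

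The main technical hurdle is precisely this persistence of the density condition across iterations. It is handled by choosing $\tilde\epsilon = \epsilon^2/4$, which is a factor of two below the initial density $\binom{\epsilon n}{2}/\binom n 2 \approx \epsilon^2/2$ of a minimum heavy hyperedge in $H^2$; this slack lets us absorb up to $\Theta(\epsilon^2 n^2)$ pair removals before the argument breaks. Everything else---the bound $\VC H^2 = O(d)$ from \cref{thm:vc_tuples}, the size of a Haussler--Welzl net, and the final extension of the coloring to all pairs---is routine.
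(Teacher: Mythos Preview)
Your proposal is correct and follows essentially the same approach as the paper: both work in the pair hypergraph $H^2$ (with $\VC H^2 = O(d)$ via \cref{thm:vc_tuples}), repeatedly extract $\frac{\epsilon^2}{4}$-nets of size $O\!\bigl(\frac{d}{\epsilon^2}\log\frac{1}{\epsilon}\bigr)$ as color classes, and observe that this can be iterated $\Omega\!\bigl(\frac{n^2\epsilon^4}{d\log(1/\epsilon)}\bigr)$ times before $\Theta(\epsilon^2 n^2)$ pairs have been consumed. Your write-up is in fact more explicit than the paper's about the trace hypergraph $H^2\bigm|_{A_i}$ and the persistence of the density condition across iterations, but the underlying argument is the same.
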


\begin{proof}
Consider the corresponding pair hypergraph $H^2$. To each hyperedge of $H$ of size $\geq \epsilon n$ corresponds a hyperedge of $H^2$ of size $\Omega(\epsilon^2 n^2)$. Take an $\frac{\epsilon^2}4$-net for $H^2$ consisting of $O(\frac d {\epsilon^{2}} \log \frac{1}{\epsilon})$ pairs of vertices, color all its pairs with one color and remove them, and then repeat the procedure. 

We can continue until $\Theta(\epsilon^2 n^2)$ pairs have been colored, which is $\Omega(\frac{n^2 \cdot \epsilon^{4}}{d} (\log \frac{1}{\epsilon})^{-1})$ steps. All remaining pairs then receive any arbitrary color.
\end{proof}

\begin{remark}
When there is a hyperedge of size $O(\epsilon n)$, any such coloring has $O(\epsilon^2 n^2)$ colors. This is within a factor $O(\frac d {\epsilon^2} \log \frac 1 \epsilon)$ of our lower bound.
\end{remark}

To put this result in a perspective, note that if the pairs of vertices of $H$ are colored in $n^2 \epsilon^4 (\log(1/\epsilon))^{-1})$ colors randomly (i.e., for each pair, we pick a color uniformly, independent of other pairs), then for each hyperedge of size $\epsilon n$, the probability that it does not contain a pair of a given color is approximately
\[
(1-n^{-2} \epsilon^{-4} \log(1/\epsilon))^{\epsilon^2 n^2/2} \approx \exp(-\epsilon^{-2}\log(1/\epsilon)/2),
\]
which is bounded away from zero. The probability that a hyperedge contains edges of all colors is extremely low, and thus it is not clear that all hyperedges should contain pairs of all colors with positive probability. Hence, the construction of \cref{Prop:Edge-coloring-rainbow} is stronger than the result obtained by a random coloring.

\subsection{\texorpdfstring{$\chi$}{\textchi}-Boundedness of graphs}

The \emph{chromatic number} $\chi(G)$ of a graph $G$ is the minimum number of colors needed to color the vertices of $G$ such that any two neighboring vertices have different colors. The \emph{clique number} $\omega(G)$ is the size of the largest complete subgraph in $G$. Obviously, we always have $\chi(G) \geq \omega(G)$. A family $\mathcal{F}$ of graphs is \emph{$\chi$-bounded} if this inequality is \enquote{not far from being tight}, namely, if there exists a \emph{binding function} $f\colon \Nat\to\Nat$ such that $\chi(G) \leq f(\omega(G))$ for every $G \in \mathcal{F}$. On the notion of $\chi$-boundedness, see the recent survey \cite{SS18}. 

One of the first results on $\chi$-boundedness is a theorem of Wagon \cite{Wagon80} showing that, for any fixed $k$, the class of graphs that do not have $k$ pairwise disjoint edges as an induced subgraph is $\chi$-bounded. The theorem is proved by induction on $k$. We present the base case $k=2$, along with its proof, which is needed for understanding our application.
\begin{theorem}\label{thm:Wagon}
If a graph $G=(V,E)$ does not contain two disjoint edges as an induced subgraph, then $\chi(G) \leq \frac{\omega(G)(\omega(G)+1)}{2}$.
\end{theorem}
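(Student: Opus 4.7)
The plan is to fix a maximum clique $K=\{v_1,\dots,v_\omega\}$ of $G$ (where $\omega=\omega(G)$) and partition $V(G)$ explicitly into $\omega+\binom{\omega}{2}=\omega(\omega+1)/2$ color classes by using each vertex's pattern of non-neighbors in $K$. For every $u\in V\setminus K$, let $\mathrm{NN}(u)=\{v\in K\colon uv\notin E\}$; since $K$ is maximum, $\lvert\mathrm{NN}(u)\rvert\geq 1$.

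The main observation — and what I expect to be the conceptual crux — is the following consequence of the $2K_2$-free hypothesis: if $u,u'\in V\setminus K$ satisfy $uu'\in E$, then $\lvert\mathrm{NN}(u)\cap\mathrm{NN}(u')\rvert\leq 1$. Indeed, if two distinct vertices $v_i,v_j\in K$ lay in both non-neighborhoods, then the disjoint edges $uu'$ and $v_iv_j$ would form an induced $2K_2$, contradicting the hypothesis. Spotting this is the key step; once one has it, everything else is bookkeeping.

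With this in hand, I would define the color classes as follows. For each $i\in[\omega]$, let $T_i=\{u\in V\setminus K\colon \mathrm{NN}(u)=\{v_i\}\}$ and set $C_i=T_i\cup\{v_i\}$. For each $u\in V\setminus K$ with $\lvert\mathrm{NN}(u)\rvert\geq 2$, pick an arbitrary pair $\{v_i,v_j\}\subseteq\mathrm{NN}(u)$ with $i<j$ and place $u$ in the class $S_{ij}$. This assigns every vertex to exactly one of the $\omega+\binom{\omega}{2}$ classes $\{C_i\}_{i}\cup\{S_{ij}\}_{i<j}$. To conclude I would verify that each class is independent: for $C_i$, an edge $uu'$ inside $T_i$ would make $\{u,u'\}\cup(K\setminus\{v_i\})$ a clique of size $\omega+1$, contradicting the maximality of $K$, while $v_i$ is non-adjacent to all of $T_i$ by definition; for $S_{ij}$, any two of its vertices share $\{v_i,v_j\}$ in their non-neighborhoods, so the key observation forbids them from being adjacent. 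Counting classes yields $\chi(G)\leq\omega+\binom{\omega}{2}=\omega(\omega+1)/2$, as desired.

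Beyond the observation above, the only moving part is the ad hoc argument for the independence of $T_i$, which needs the maximum-clique property rather than the $2K_2$-free property; the rest is a direct, non-inductive coloring, so no serious obstacles remain once the key observation is in place.
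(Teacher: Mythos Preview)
Your proof is correct and follows essentially the same approach as the paper: fix a maximum clique, use $\omega$ classes for vertices with exactly one non-neighbor in the clique (together with that non-neighbor itself), and $\binom{\omega}{2}$ classes indexed by pairs of non-neighbors, with independence of the latter coming from the $2K_2$-free hypothesis and of the former from maximality of the clique. The only cosmetic difference is that you explicitly assign each vertex with $\lvert\mathrm{NN}(u)\rvert\geq 2$ to a single pair class, whereas the paper lets the sets $S_{a,b}$ overlap; both are fine for an upper bound on $\chi$.
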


\begin{proof}
Let $A\subset V$ be a set of vertices such that the graph induced on $A$ is a clique of size $\omega(G)$. For each $a\in A$, let $S_a\subseteq V\setminus A$ be the set of vertices which are adjacent to all of the vertices of $A\setminus \{a\}$ and are not adjacent to $a$. Note that, for each $a\in A$, the set $S_a$ is independent, as otherwise there would be a clique in $G$ of size $\omega(G)+1$. 

For each $\{a,b\}\subseteq A$, let $S_{a,b}\subseteq V\setminus A$ be the set of vertices adjacent to neither $a$ nor $b$. Note that for each $\{a,b\}\subseteq A$, $S_{a,b}$ is also independent, as otherwise we would get an induced copy of two disjoint edges in $G$.

Further note that each $v\in V\setminus A$ is not adjacent to some vertex of $A$ as otherwise we would find a clique of size $\omega(G)+1$ in G, and therefore is in one of the sets described above.

We define a coloring $c$ of $G$ as follows. Each vertex in $A$ receives a unique color. For each $a\in A$, we color the set $S_a$ with color $c(a)$. We assign a new color to each of the sets $S_{a,b}$, $\{a,b\}\in A$. It is easy to check that it is a proper coloring. Moreover, we used $\card A + \binom {\card A} 2 = \frac{\omega(G)(\omega(G)+1)}{2}$ colors.
\end{proof}
The proof for general $k$ is an easy inductive argument, which essentially repeats the base step presented above. The binding function it yields is a polynomial of degree $2(k-1)$.

 We observe that the argument used in the proof of \cref{thm:Wagon} can be improved using an \emph{edge-hitting set} (i.e., a set of edges such that each hyperedge contains at least one of them) for an appropriately chosen hypergraph.

Consider the last step of the proof. The crucial observation it uses is that any $v \in V \setminus (A \cup S)$ belongs to some $S_{a,b}$. However, it seems that many of the vertices belong to many sets $S_{a,b}$ (formally, if $v$ has $\card{A}-\ell$ neighbors in $A$, then it belongs to ${\binom{\ell}{2}}$ sets $S_{a,b}$). Hence, it is plausible that we can \enquote{cover} all vertices $v \in V \setminus (A \cup S)$ by a smaller number of sets $S_{a,b}$, and hence, reduce the total chromatic number.

This is achieved by using an edge-hitting set. Let $H$ be the hypergraph whose vertex set is $A$, and whose hyperedge set is $\{e_v\}_{v \in V \setminus (A \cup S)}$, where $e_v = \{a \in A: (a,v) \not \in \edges\}$. That is, each $v \in V \setminus (A \cup S)$ induces an hyperedge that consists of all its non-neighbors in $A$. Let $T=\{(w_i,w'_i)\}_{i=1,\ldots,m}$ be an edge-hitting set for $H$. We claim that in the third step of Wagon's proof, instead of taking all pairs $\{a,b\} \subset A$, it is sufficient to take the $m$ pairs $\{w_i,w'_i\}$. Indeed, let $v \in V \setminus (A \cup S)$. By Wagon's argument, $v$ has at least two non-neighbors in $A$, and hence, the hyperedge $e_v$ is of size $\geq 2$. Hence, it contains some edge $(w_i,w'_i) \in T$. By the definition of $H$, this implies $v \in S_{w_i,w'_i}$. Thus, for each $v \in V \setminus (A \cup S)$ we have $v \in S_{w_i,w'_i}$ for some $(w_i,w'_i) \in T$, and so, it is sufficient to color the $m$ sets $S_{w_i,w'_i}$.

This reduces the bound on the chromatic number obtained by Wagon from $\omega+ \frac{\omega(\omega-1)}{2}$ to $\omega + m$, where $m$ is the smallest size of an \emph{edge-hitting set} for $H$.

The notion of $\epsilon$-2-nets can be useful in this context in two ways.

 First, $\epsilon$-2-nets can be used to construct a good edge-hitting set, using the strategy of constructing an approximate hitting set from an $\epsilon$-net, pioneered by Br\"{o}nnimann and Goodrich \cite{BG95} and followed-up in numerous works (e.g., \cite{CV07,ERS05}). One possible way to do this is to consider the hypergraph $H^2_{lc}$ that corresponds to $H$ (see \cref{sec:tuple}), find an $\frac{\epsilon}{2}$-net for it (which is an $\epsilon$-2-nets for $H$, as was shown in Section \ref{sec:opt_eps_2_net}), and use the algorithm of \cite{BG95} to leverage them into an approximate hitting set of $H^2_{lc}$, which is a small edge-hitting set for $H$. Another possible way is working directly with $\epsilon$-2-nets of $H$ (in cases where this approach is advantageous over working with $H^2_{lc}$) and transforming them into an approximate edge-hitting set of $H$, using a variant of the algorithm of \cite{BG95}.

Second, we can use an $\epsilon$-2-net to cover all \enquote{sufficiently large} hyperedges of $H$, which corresponds to coloring all vertices of $V \setminus (A \cup S)$ that have sufficiently many non-neighbors in $A$. Then, it remains to color vertices that have many neighbors in $A$, and one may hope that since $G$ does not contain a pair of disjoint edges, they can be colored in a relatively small number of colors.

\subsection{Secret sharing}

The relevance of $\epsilon$-$t$-nets to secret sharing was described in the introduction. We would like to add a few remarks:
\begin{itemize}
\item The classical objective of secret sharing is that, for some threshold $t$, the secret may be recovered by any coalition of $t$ members, and none of $t-1$ members. In our variant, we do not require that every large coalition be able to recover the secret, but only a certain upward closed subset of coalitions (e.g., those containing at least $\frac{n}{10}$ members, a representative of each university, and a senior researcher). This natural generalization is known as hierarchical secret sharing \cite{Simmons88,Tassa07}.
    
\item Departing from classical work on secret sharing, our problem has \emph{two thresholds} -- a \enquote{necessary threshold} of 6 members, required to obtain the secret, and a \enquote{sufficient threshold} which consists of a set of coalitions that must be able to obtain the secret. There is no restriction for the coalitions \enquote{between the thresholds}; they may or may not be able to obtain the secret. To the best of our knowledge, such a \enquote{two thresholds} scheme has not appeared in the secret sharing literature, although it seems interesting and worthy of study.
    
\item Another deviation from the usual setting of secret sharing is that we seek to minimize the number of divided key shares. This goal is natural in settings where the keys are physical, as in the \enquote{scientists problem} described in the introduction, or in secret sharing schemes in which the generation and storage of key shares have a cost.
\end{itemize} 

\section{Discussion and Open Problems}
\label{sec:open}
A hypergraph $H$ with finite VC-dimension $d$ has $\epsilon$-2-nets of size $O(\frac{d}{\epsilon} \log \frac{1}{\epsilon})$  when $n$ is very large as a function of $\frac 1\epsilon$.
This upper bound is the best possible in general, and as we saw in \cref{sec:opt_eps_2_net} may also be best possible even if $H$ admits smaller $\epsilon$-nets. 
However, we conjecture that in any \enquote{reasonable} setting, (including, e.g., all the geometric scenarios discussed in \cref{sec:geom}, and all hypergraphs with hereditarily small $\epsilon$-nets), the existence of an $\epsilon$-net of some order of magnitude, implies the existence of an $\epsilon$-2-net of roughly the same order of magnitude.

Furthermore, we are not aware of any hypergraph in which the dependence of $n$ in $\frac{1}{\epsilon}$ has to be as large as in the assumption of \cref{thm:main_ht>2}. It may be interesting to extend our results to smaller values of $n$ (as a function of $\frac{1}{\epsilon}$), and to understand whether (as in some of the geometric cases discussed above), there exists a sharp threshold (as a function of $\frac{1}{\epsilon}$) such that if $n$ is above this threshold, then the hypergraph admits an $\epsilon$-2-net of size $\Tilde{O}(\frac{1}{\epsilon})$, but if $n$ is below it, then any $\epsilon$-2-net for the hypergraph contains at least $\Omega(\frac{1}{\epsilon^2})$ pairs.

\bibliographystyle{plainurl}
\bibliography{references} 

\end{document}